\tikzset{
  dot/.style={
    circle, fill=black, inner sep=1pt, outer sep=0pt
  }
}
\pgfplotsset{width=12cm,compat=1.13}
\newtheorem{definition}{Definition}
\newtheorem{lemma}{Lemma}
\newtheorem{property}{Property}
\begin{document}

% \preprint{APS/123-QED}

\title{Quantum circuit synthesis with SQiSW}% Force line breaks with \\
%\thanks{A footnote to the article title}%

\author{Jialiang Tang}
 \email{tangjialiang20@mails.ucas.ac.cn}
 % \altaffiliation[Also at ]{Physics Department, XYZ University.} %Lines break automatically or can be forced with \\
\author{Jialin Zhang}%
 \email{zhangjialin@ict.ac.cn}
\author{Xiaoming Sun}%
 \email{sunxiaoming@ict.ac.cn}
\affiliation{%
 State Key Lab of Processors, Institute of Computing Technology, Chinese Academy of Sciences, Beijing 100190, China 
}%
\affiliation{%
 School of Computer Science and Technology, University of Chinese Academy of Sciences, Beijing 100049, China
}%

\begin{abstract}
The primary objective of quantum circuit synthesis is to efficiently and accurately realize specific quantum algorithms or operations utilizing a predefined set of quantum gates, while also optimizing the circuit size. It holds a pivotal position in Noisy Intermediate-Scale Quantum (NISQ) computation. Historically, most synthesis efforts have predominantly utilized CNOT or CZ gates as the 2-qubit gates. However, the SQiSW gate, also known as the square root of iSWAP gate, has garnered considerable attention due to its outstanding experimental performance with low error rates and high efficiency in 2-qubit gate synthesis. In this paper, we investigate the potential of the SQiSW gate in various synthesis problems by utilizing only the SQiSW gate along with arbitrary single-qubit gates, while optimizing the overall circuit size. For exact synthesis, the upper bound of SQiSW gates to synthesize arbitrary 3-qubit and $n$-qubit gates are 24 and $\frac{139}{192}4^n(1+o(1))$ respectively, which relies on the properties of SQiSW gate in Lie theory and Quantum Shannon Decomposition. We also introduce an exact synthesis scheme for Toffoli gate using only 8 SQiSW gates, which is grounded in numerical observation. More generally, with respect to numerical approximations, we provide a theoretical analysis of a pruning algorithm to reduce the size of the searching space in numerical experiment to $\frac{1}{12}+o(1)$ of previous size, helping us reach the result that 11 SQiSW gates are enough in arbitrary 3-qubit gates synthesis up to an acceptable numerical error.

\end{abstract}
\maketitle

%%%%%%%%%%%%%%%%%%%%%%introduction%%%%%%%%%%%%%%%%%%%%%%%%%%
\section{Introduction}\label{intro}
Quantum circuit synthesis is crucial to the implement of quantum algorithm on physical devices. It involves constructing a quantum circuit to realize the target unitary operator, while optimizing the circuit size or depth with respect to a given gate set \cite{Shende2006, Mottonen2004, jiang2022}. 

A lot of research has been done on quantum circuit synthesis, most of which are about exact synthesis and focus on CNOT (Controlled-NOT) gate at the early stage \cite{ Barenco1995,aho2003,Amy2018,yang2025,Cybenko2001,vatan2004,Knill1995,Mottonen2004,Shende2006,zi2025, guo}. The first synthesis algorithm using CNOT and arbitrary single-qubit gates was designed by Barenco et al. in 1995 \cite{Barenco1995}, giving an upper bound of $O(n^34^n)$ to synthesize arbitrary $n$-qubit gates. The upper bound was improved to $O(n4^n)$ in the same year \cite{Knill1995}. During the next decade, circuit transform techniques and Gray code were adopted in circuit synthesis to reduce the upper bound \cite{aho2003,Knill19952,Vartiainen2004}. Actually, the lower bound of arbitrary $n$-qubit gate was proved to be $\lfloor\frac{1}{4}(4^n-3n-1)\rfloor$ by parameter counting \cite{shende2004}. And the state-of-the-art result of upper bound is $\frac{23}{48}4^n-\frac{3}{2}2^n+\frac{4}{3}$ by making use of a brand new method called Quantum Shannon Decomposition \cite{Shende2006}. This upper bound is already no more than 2 times of the lower bound. On Toffoli gate, the CNOT cost is 6, which is a tight result \cite{shende2008, QCQI}. There are also several works to optimize CNOT-cost of multi-qubit Toffoli gate \cite{nie2024}. Recently, Chen et al. \cite{chen2023} looked into the universal synthesis using arbitrary 2-qubit gates. They proposed a ``rule-them-all'' exact synthesis scheme AshN, pointing out that the upper bound of circuit size of 2-qubit gates to synthesize $n$-qubit gate is $\frac{23}{64}4^n(1+o(1))$.

There are also also several studies on how to \textit{numerically approximate} the target unitary matrix, that is, how to optimize the circuit size required to approximate the target circuits within an acceptable numerical error~\cite{Goubault2019, Martinez2016, Cerezo2022, shirakawa2021, Ashhab2022, ashhab2023}. A study in 2019 \cite{Goubault2019} explored how to get a optimal synthesis of circuits numerically, which uses M{\o}lmer S{\o}rensen (MS) gate. Since MS gate entangles all the qubits in the circuit, the whole structure of circuit is fixed and determining the continuous parameters of different gates is the only thing. In 2022, Ashhab et al. \cite{Ashhab2022} designed a numerical optimization algorithm using CNOT gate and arbitrary single-qubit gates, achieving good performance in 2-qubit quantum state preparation, 3-qubit quantum state preparation and synthesis of $n$-qubit Toffoli gate. The number of CNOT gates to reach high fidelity is even close to the theoretical lower bound.
% CNOT gate has a wide application in both theory and practice, but has higher error than single-qubit gate, which is a common property of multi-qubit gate over single-qubit gate. Thus the cost of a quantum circuit can be described as the total number of multi-qubit gate in it. 

In addition, some works have proposed more general circuit synthesis and optimization frameworks that can support arbitrary set of gates. Anouk et al. raised Synthetiq \cite{synthetiq}, a fast and versatile quantum circuit synthesis framework based on Simulated Annealing. Addressing the limitations of existing tools, Synthetiq synthesizes circuits over arbitrary finite gate sets and supports partial specifications. Also, researchers from Berkeley developed a universal circuit compilation framework called BQSKit \cite{BQSKit, BQSKit_code}, which is a comprehensive and powerful circuit compilation tool. It is dedicated to providing efficient and scalable quantum circuit optimization and synthesis solutions. Its core goal is to reduce the depth of quantum circuits through innovative algorithms, improve operational reliability and computational efficiency, and is suitable for fields such as subroutine compilation, gate group conversion, and algorithm exploration.

However, other quantum gates besides CNOT gate are rarely studied in quantum circuit synthesis. A problem has then emerged that whether there exists a kind of quantum gate that has a more powerful synthesis ability and also shows off lower experimental error at the same time.

The SQiSW gate, known as the square root of iSWAP gate, is a 2-qubit quantum gate experimentally realized in earlier works \cite{Norbert2003,Bialczak2010,Abrams2020}. Recently it has been proved to potentially realize the dream \cite{Huang2023}. On one hand, SQiSW gate has a shorter gate time and lower error than CNOT gate on superconducting quantum processor of Alibaba. This new type of quantum gate set has excellent performance in experiments \cite{Huang2023}: The gate fidelity measured on single SQiSW gate is up to 99.72\%, with an average of 99.31\%. For any 2-qubit gate synthesis problem, it can achieve an average fidelity of 96.38\% on the random samples. Compared to using iSWAP gates on the same processor, the former reduces the average error by 41\%, while the latter reduces it by 50\%. On the other hand, SQiSW gate has an improved ability in synthesis of arbitrary 2-qubit gates than CNOT gate. Specifically, the upper bound of SQiSW gates and CNOT gates to synthesize arbitrary 2-qubit gates are both 3, and about 79\% 2-qubit gates can be synthesized by at most 2 SQiSW gates while the gates generated by 2 CNOT gates constitute a zero-measure set (see supplemental material of \cite{Huang2023}). These results rely on KAK decomposition and Weyl chamber, which are Lie algebra mathematical tools to depict general properties of $SU(4)$.  

Due to the superior properties of SQiSW gate over common CNOT gate, it's significant to look deeper into the potential of SQiSW gate in circuit synthesis, which hopefully help implement more efficient quantum circuits on superconducting quantum computers.

In this paper, we focus on the exact synthesis and numerical optimization with SQiSW circuits. We utilize only the
SQiSW gate along with arbitrary single-qubit gates, aiming at the optimization of number of SQiSW gates. Our main results are presented as follows.

\begin{restatable}{theorem}{thmone}\label{thm1}
    An arbitrary 3-qubit gate can be synthesized using a maximum of 24 SQiSW gates.
\end{restatable}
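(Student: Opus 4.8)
The plan is to combine Quantum Shannon Decomposition (QSD) with the known single-block result that every 2-qubit gate can be realised with at most $3$ SQiSW gates. First I would apply one level of QSD to the target $U \in SU(8)$: the cosine--sine decomposition writes $U$ as a product of two quantum multiplexors and a central uniformly controlled $R_y$ rotation, and each of the two multiplexors in turn splits (after diagonalising the relevant block) into two generic 2-qubit gates and a uniformly controlled $R_z$ rotation. This leaves exactly four generic 2-qubit blocks together with three two-control uniformly controlled rotations (one $R_y$ and two $R_z$), which is precisely the building-block structure underlying the $24$-CNOT count of an unoptimised 3-qubit QSD.

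The cost then separates into two independent pieces. Each of the four 2-qubit blocks is an arbitrary element of $SU(4)$, so the 2-qubit SQiSW synthesis result bounds its cost by $3$ SQiSW gates, contributing $4 \times 3 = 12$. The remaining task, and the technical core, is the lemma that \emph{a uniformly controlled single-qubit rotation with two control qubits can be synthesised with at most $4$ SQiSW gates}. Granting this, the three multiplexed rotations contribute $3 \times 4 = 12$, and the grand total is $24$.

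To prove the lemma I would avoid naively replacing the four CNOTs of the standard Möttönen implementation by SQiSW gates, since a single CNOT already consumes up to $2$ SQiSW and that route would cost $8$. Instead I would exploit the fact that the CNOT implementation of a two-control multiplexed rotation touches only the two edges $(q_0,q_1)$ and $(q_0,q_2)$, using exactly two entangling gates per edge interleaved with single-qubit rotations on the target. I would then show that each edge's two-entangler pattern lies inside the class of 2-qubit operations reachable with two SQiSW gates, so that the whole gate is reproduced edge-by-edge with $2+2 = 4$ SQiSW. For the diagonal $R_z$ case this is cleanest in the commuting Pauli-$Z$ picture, where the generator is a combination of $Z_0$, $Z_0Z_1$, $Z_0Z_2$ and $Z_0Z_1Z_2$; the $R_y$ case reduces to it by conjugating the target qubit with free single-qubit basis changes. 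The verification that the required per-edge interaction actually lands in the $2$-SQiSW region is exactly what the KAK-coordinate and Weyl-chamber description of SQiSW reachability supplies.

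The hard part will be this lemma: certifying that the combined two-edge interaction demanded by a two-control multiplexed rotation genuinely lives inside the region reachable with $4$ SQiSW gates rather than forcing a fifth, including a careful treatment of how the single-qubit rotations on $q_0$ that bridge the two edges interact with the entanglers. This is where the Lie-theoretic SQiSW reachability analysis does the real work; the QSD bookkeeping and the $4 \times 3 + 3 \times 4 = 24$ arithmetic are routine once the lemma is established.
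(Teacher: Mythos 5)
Your reduction hinges on the lemma that a two-control multiplexed rotation costs at most $4$ SQiSW gates, and the proof strategy you sketch for that lemma cannot work. In the standard $4$-CNOT implementation (Lemma~\ref{lemma1} of the paper, expanded via Gray code), the two CNOTs on a given edge are \emph{not} contiguous: the entanglers alternate as (edge $a$, edge $b$, edge $a$, edge $b$), all four sharing the target qubit. So ``each edge's two-entangler pattern'' is not a standalone 2-qubit unitary sitting inside the circuit, and Lemma~\ref{lemma3} (the Weyl-chamber criterion for 2-SQiSW reachability) characterizes only standalone elements of $SU(4)$ --- it has nothing to certify here. If you instead force the grouping to be contiguous, i.e.\ write the multiplexed rotation as a product $A_{12}\cdot B_{02}$ of one unitary per edge, the statement becomes provably false: a generic two-control multiplexed-$R_z$ has commuting generator $aZ_2+bZ_1Z_2+cZ_0Z_2+dZ_0Z_1Z_2$ with $d\neq 0$, and from $A_{12}=MB_{02}^{\dagger}$ together with $[A_{12},X_0]=0$ one gets that $M^{-1}(X_0MX_0)=\exp\bigl(-2i(cZ_0Z_2+dZ_0Z_1Z_2)\bigr)$ must act trivially on qubit $1$, which forces $d\in\frac{\pi}{2}\mathbb{Z}$. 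So the three-body term makes edge-by-edge synthesis impossible, not merely unproven; the interleaving is essential, and no tool in the paper (or in your proposal) analyzes interleaved SQiSW circuits. Tellingly, the paper's own Theorem~\ref{thm2} pays $2\cdot 2^{n-1}$ SQiSW per multiplexed rotation (two per CNOT); with that honest cost your QSD budget is $4\times 3+3\times 8=36$, not $24$.

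The paper reaches $24$ by a genuinely different route that avoids multiplexed rotations altogether: it starts from the decomposition of Chen et al.\ (Lemma~\ref{lemma5}) of a 3-qubit gate into $11$ two-qubit gates with special structure, and then shows via the Weyl chamber that almost all of those blocks are cheap --- the four diagonal gates are locally equivalent to $R_{zz}$ (coordinates $(0,0,-\theta/2)$, inside $W'$), the middle CZ costs $2$, the blocks $V_5,V_6$ (diagonal times controlled-$R_y$) have two vanishing interaction coefficients and also cost $2$, and Lemma~\ref{lemma6} strips a diagonal off $V_1,V_3$ so they too cost $2$, leaving only $V_2,V_4$ at the full cost of $3$. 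The budget is $2\times 3+4\times 2+2\times 2+2\times 2+2=24$. If you want to rescue your QSD route, you would need a new argument showing that four SQiSW gates in the genuinely interleaved order realize an arbitrary two-control multiplexed rotation; nothing in the KAK/Weyl-chamber toolkit supplies that, and it appears to be open.
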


\begin{restatable}{theorem}{thmtwo}\label{thm2}
    An arbitrary $n$-qubit gate can be synthesized using a maximum of $\frac{139}{192}\times 4^n - 3\times 2^n + \frac{5}{3}$ SQiSW gates.
\end{restatable}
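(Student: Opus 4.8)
The plan is to run the Quantum Shannon Decomposition (QSD) recursively and to cost each of its building blocks in SQiSW gates, closing the recursion with the three-qubit bound of Theorem~\ref{thm1}; throughout, single-qubit gates are free and only SQiSW gates are counted. First I would recall the QSD of an arbitrary $U\in SU(2^{n})$: the cosine--sine decomposition writes $U$ as a central $R_{y}$ multiplexor flanked by two block-diagonal quantum multiplexors, and diagonalizing those multiplexors produces an $R_{z}$ multiplexor together with two $(n-1)$-qubit gates on each side. Altogether $U$ becomes four $(n-1)$-qubit gates interleaved with three uniformly controlled rotations, each controlled by the other $n-1$ qubits. This yields the skeleton recursion $S(n)=4\,S(n-1)+(\text{cost of the three multiplexors at level }n)$.

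The key lemma is the SQiSW cost of a uniformly controlled rotation. In the Gray-code construction a multiplexor with $n-1$ controls is a chain of $2^{n-1}$ CNOTs alternating with single-qubit rotations on the target. Using the Weyl-chamber/KAK analysis already invoked for the two-qubit case, I would show a CNOT is realizable (up to single-qubit gates) by exactly two SQiSW gates, so each entangling slot costs two SQiSW and a single multiplexor costs $2^{n}$ SQiSW once the liberated single-qubit layers are merged into their neighbors. Three multiplexors contribute $3\cdot 2^{n}$, and after the interface optimizations below the recursion becomes $S(n)=4\,S(n-1)+3\cdot 2^{n}-5$.

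For the base case I would invoke Theorem~\ref{thm1}: $S(3)=24$. This is the decisive quantitative input, since $24$ beats the value $31$ that the same recurrence would assign to a leaf obtained by recursing down to generic two-qubit gates (cost $3$); because the leaves sit at a fixed depth, their cost is propagated with weight $4^{\,n-3}$ and therefore enters the leading coefficient, lowering it from $\tfrac{5}{6}$ to $\tfrac{139}{192}$. Unrolling $S(n)=4^{\,n-3}S(3)+\sum_{k=4}^{n}4^{\,n-k}\!\left(3\cdot 2^{k}-5\right)$ gives $S(n)=\tfrac{3}{8}4^{n}+\bigl(\tfrac{3}{8}4^{n}-3\cdot 2^{n}\bigr)-\tfrac{5}{192}4^{n}+\tfrac{5}{3}=\tfrac{139}{192}4^{n}-3\cdot 2^{n}+\tfrac{5}{3}$, which is the claim.

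The step I expect to be the main obstacle is justifying the additive constant $-5$, i.e.\ the per-level interface optimizations. One must track the single-qubit and diagonal phase gates sitting between each multiplexor and the adjacent recursive $(n-1)$-qubit blocks, absorb them to delete a bounded number of SQiSW gates per level, and check that the total saving is exactly $5$ (the SQiSW analogue of the $-4$ in the CNOT QSD bound $\tfrac{23}{48}4^{n}-\tfrac{3}{2}2^{n}+\tfrac{4}{3}$, which is not simply doubled because the mergeable gadgets differ). Finally I would confirm the construction never descends below three qubits, so that Theorem~\ref{thm1} --- rather than the non-integer $n=2$ value of the formula --- supplies the leaves, giving the exact bound for all $n\ge 3$.
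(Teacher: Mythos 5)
Your skeleton is the same as the paper's: the Quantum Shannon Decomposition recursion (Lemma~\ref{lemma9}), each multiplexed rotation costed at $2^{n-1}$ CNOTs (Lemma~\ref{lemma1}), each CNOT realized by two SQiSW gates (its interaction coefficients $(\tfrac{\pi}{4},0,0)$ lie in $W^{\prime}$ of Lemma~\ref{lemma3}), the base case $c_3=24$ from Theorem~\ref{thm1}, and your unrolled arithmetic is correct. The genuine gap is the one you flag yourself: the additive $-5$ per level is asserted, not proved, and it is precisely this term that lowers the leading coefficient from $\tfrac{3}{4}$ (which is all your argument actually establishes, i.e.\ $c_n\le\tfrac{3}{4}4^n-3\cdot 2^n$) to $\tfrac{139}{192}$. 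Saying one must ``absorb gates and check the saving is exactly $5$'' restates the goal rather than supplying the mechanism.

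The paper obtains the total saving $\tfrac{5}{3}(4^{n-3}-1)$ --- numerically equal to your $\sum_{k=4}^{n}4^{n-k}\cdot 5$ --- from two concrete mechanisms, neither of which appears in your proposal. First, CZ is locally equivalent to CNOT and Lemma~\ref{lemma1} survives the substitution, so each multiplexed-$R_y$ chain from the cosine--sine step (Lemma~\ref{lemma7}) can be arranged with one CZ at its edge; that CZ is diagonal, hence absorbed into the adjacent block-diagonal multiplexor (which remains a multiplexor), saving two SQiSW per internal node of the recursion tree, $\tfrac{2}{3}(4^{n-3}-1)$ in total. Second, at the leaves the 3-qubit scheme behind Theorem~\ref{thm1} is modified: $V_2$ or $V_4$ in Lemma~\ref{lemma5} is synthesized via Lemma~\ref{lemma6} as two SQiSW plus a 2-qubit diagonal sitting at the edge of the 3-qubit block; this diagonal acts only on control qubits of every multiplexed rotation separating consecutive leaves, so it commutes through them and is absorbed by the neighboring (still arbitrary) 3-qubit gate, saving one SQiSW per adjacent leaf pair, $4^{n-3}-1$ in total. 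Note also that your per-level recursion $S(n)=4S(n-1)+3\cdot 2^{n}-5$ cannot be justified node-by-node as written: the ``$3$'' part of the $5$ corresponds to leaf-interface savings, so proving it level-by-level would require a strengthened induction hypothesis (every $(n-1)$-qubit block's synthesis must itself expose an absorbable edge diagonal). The paper avoids this by accounting globally after unrolling to $l=3$; the numbers agree only because the number of adjacent leaf pairs is exactly three times the number of internal nodes.
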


Our approach to deriving Theorem \ref{thm1} is that we firstly decompose arbitrary 3-qubit gates into arbitrary 2-qubit gates,  and subsequently utilizing the properties of SQiSW gates in synthesis of arbitrary 2-qubit gates. To prove Theorem \ref{thm2}, we employ Quantum Shannon Decomposition recursively in which the base case is Theorem \ref{thm1}, with several circuit optimization techniques.

Algorithm \ref{alg:numerical toffoli} and algorithm \ref{alg:numerical 3qubit} are numerical optimization algorithms for Toffoli gate and arbitrary 3-qubit gates respectively. Algorithm \ref{alg3} is a pruning algorithm using two pruning techniques, which are qubit re-arrangement and circuit reversion \cite{Ashhab2022, synthetiq}, to speed up our numerical optimization algorithms.

The numerical results indicate that we only need 8 and 11 SQiSW gates in circuit to synthesize Toffoli gate and arbitrary 3-qubit gates respectively under acceptable numerical error, which is intuitively shown in Fig.~\ref{fig3} and Fig.~\ref{fig4}. Theorem \ref{thm3} is about the efficiency analysis of algorithm \ref{alg:numerical toffoli} and algorithm \ref{alg:numerical 3qubit} with application of algorithm \ref{alg3} on them.

\begin{restatable}{theorem}{thmthree}\label{thm3}
    The pruning algorithm \ref{alg3} reduces the size of circuit structure space (with 3 qubits and N quantum gates) in algorithm \ref{alg:numerical toffoli} and algorithm \ref{alg:numerical 3qubit} to $\frac{1}{12}\times3^{N}+3^{\lfloor\frac{N-1}{2}\rfloor}+\frac{1}{4}=\frac{1}{12}3^{N}(1+o(1))$.

\end{restatable}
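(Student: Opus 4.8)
The plan is to recognize the circuit-structure space as the set of placements of the $N$ two-qubit gates onto the three possible qubit pairs of a $3$-qubit register, so that an unpruned structure is simply a word in $\{1,2,3\}^{N}$ and the total space has size $3^{N}$. I would then model the two pruning moves as a group action on this set: qubit re-arrangement permutes the three labels $\{1,2,3\}$ simultaneously in every position and so generates a copy of $S_3$ (order $6$) acting on the three pairs, while circuit reversion reverses the order of the gates and generates a $\mathbb{Z}_2$. Since relabelling qubits and reversing the gate order commute (one acts on the alphabet, the other on the index set), together they generate $G=S_3\times\mathbb{Z}_2$ of order $12$. The key reduction is the observation that the pruning keeps exactly one representative from each $G$-orbit, so the size of the pruned space equals the number of orbits of $G$ on $\{1,2,3\}^{N}$, which I would count with Burnside's lemma.

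The core of the argument is then the evaluation of $\frac{1}{|G|}\sum_{g\in G}|\mathrm{Fix}(g)|$, running through the $12$ elements according to the cycle type of the $S_3$-part and the presence of reversion. For the six elements without reversion, $(\sigma,\mathrm{id})$ fixes a word iff every letter is a fixed point of $\sigma$: the identity fixes all $3^{N}$ words, each of the three transpositions fixes only the constant word on its fixed label, and the two $3$-cycles fix none. For the six elements with reversion I would pair the positions $i$ and $N+1-i$; the constraint $\sigma(S_{N+1-i})=S_i$ forces $S_{N+1-i}=\sigma(S_i)$ with $S_i$ free subject to $\sigma^{2}(S_i)=S_i$, so each pair contributes a factor equal to the number of fixed points of $\sigma^{2}$ — three when $\sigma$ is the identity or a transposition (giving $3^{\lfloor N/2\rfloor}$ from the $\lfloor N/2\rfloor$ pairs) and none when $\sigma$ is a $3$-cycle — while for odd $N$ the lone central position adds a factor equal to the number of fixed points of $\sigma$ itself. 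Collecting the nonzero contributions and dividing by $12$ yields a closed form of the stated shape $\tfrac{1}{12}3^{N}+3^{\lfloor (N-1)/2\rfloor}+\tfrac14$, whose leading behaviour is $\tfrac{1}{12}3^{N}(1+o(1))$ because every correction term is $O(3^{N/2})$.

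I expect the main obstacle to be the bookkeeping for the reversion elements, and in particular the treatment of the central gate when $N$ is odd: there the fixed label of a transposition (and the failure of any fixed point for a $3$-cycle) interacts with the single middle position, and this is precisely where the exponent $\lfloor (N-1)/2\rfloor$ and the even/odd split of the subleading term are decided, so it is the step I would verify most carefully. A secondary point worth pinning down is that the two pruning symmetries really do generate the full group of order $12$ and that the canonicalisation selects a genuine orbit representative, so that no orbit is double-counted or discarded; once these are in place the Burnside computation is routine, and the asymptotic factor $\tfrac{1}{12}$ — the reciprocal of $|G|$ — falls out immediately, independently of the lower-order terms.
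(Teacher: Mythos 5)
Your Burnside strategy is sound, and at its computational core it coincides with the paper's own proof: the paper also reduces everything to counting, for each of the six qubit relabelings $f$, the structures $C$ with $f_{inv}(C)=f(C)$ -- these are exactly your fixed-point counts for the six reversion-type elements $(\sigma,r)$, computed by the same pairing of positions $i\leftrightarrow N+1-i$ and the same special treatment of the middle position for odd $N$ -- and it implicitly uses that pure relabelings fix only the three constant structures. Where you differ is the final orbit count: the paper first argues that every equivalence closure has exactly $3$, $6$ or $12$ elements and then tallies closures of each size, whereas you average fixed points over the order-$12$ group $S_3\times\mathbb{Z}_2$. Burnside buys robustness: you never need the closure-size classification (argued somewhat informally in the paper), and double counting is impossible by construction. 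The paper's route buys an explicit inventory (one closure of size $3$, and explicit counts of the size-$6$ and size-$12$ closures), which is how its formula is assembled.

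One correction, however. Carried out faithfully, your own recipe does \emph{not} yield the displayed closed form when $N$ is odd: the reversion elements then contribute $3^{\lfloor N/2\rfloor+1}$ (identity) plus $3\cdot 3^{\lfloor N/2\rfloor}$ (the three transpositions, each cut down to one choice at the middle position), so the Burnside sum is $3^N+3+2\cdot 3^{\lfloor N/2\rfloor+1}$ and the orbit count is
\begin{equation*}
\frac{1}{12}\,3^{N}+\frac{1}{2}\,3^{\lfloor (N-1)/2\rfloor}+\frac{1}{4},
\end{equation*}
carrying an extra factor $\tfrac12$ on the middle term relative to the theorem statement. This is not a flaw in your method: the paper's own proof derives exactly this odd-$N$ formula, and the statement's expression cannot be correct there, since it is not even an integer -- for $N=1$ it gives $3/2$ while the true count is $1$ (the three constant structures form a single orbit), and for $N=3$ it gives $11/2$ while the true count is $4$. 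So trust your computation over the displayed formula: the exact expression in the statement matches only the even case ($\lfloor(N-1)/2\rfloor=N/2-1$ there), and when you write this up you should state the even and odd cases separately, as the paper's proof in fact does. The asymptotic claim $\frac{1}{12}3^{N}(1+o(1))$, which is the substance of the theorem, is unaffected either way.
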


\begin{restatable}{theorem}{thmfour}\label{thm4}
    Toffoli gate can be exactly synthesized by 8 SQiSW gates, a scheme shown in Fig.~\ref{fig5}.
\end{restatable}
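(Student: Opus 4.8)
The plan is to regard the eight-gate circuit of Fig.~\ref{fig5} as a fixed ansatz and reduce the exactness claim to a single symbolic verification. First I would read off the \emph{topology} from the figure: the ordered list of qubit pairs carrying the eight SQiSW gates, together with the single-qubit gates placed on the boundary and between consecutive two-qubit layers. Writing each single-qubit gate in the Euler form $R_z(\alpha)R_y(\beta)R_z(\gamma)$, the whole circuit becomes a unitary $U(\theta)$ depending analytically on a finite real parameter vector $\theta$, and the target reduces to exhibiting one point $\theta^\star$ with $U(\theta^\star)=e^{i\phi}\,\mathrm{Toffoli}$ for some global phase $\phi$.

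The substance of the argument is converting the numerical observation into an exact certificate. I would run Algorithm~\ref{alg:numerical toffoli} on this fixed topology until the infidelity collapses to (numerical) zero at eight gates, consistent with Fig.~\ref{fig3}, yielding floating-point angles. The crucial step is then \emph{recognition}: I expect the optimal angles to sit at simple closed forms, namely rational multiples of $\pi$ such as $\pm\pi/4,\pm\pi/8,\pm3\pi/8$, possibly accompanied by a few algebraic angles like $\arctan\sqrt{2}$ or $\arccos(1/\sqrt{3})$ that are natural given the $1/\sqrt{2}$ entries of the SQiSW matrix. Recomputing the minimum to high precision and running an inverse-symbolic / integer-relation search against those values, I would pin down exact candidates $\theta^\star$.

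With exact angles fixed, the remaining verification is mechanical and rigorous: substitute $\theta^\star$ into the symbolic $8\times 8$ product of the eight SQiSW matrices and the interleaved single-qubit matrices, simplify over the number field generated by the relevant algebraic constants, and check entrywise that the result equals the Toffoli matrix up to the single scalar $e^{i\phi}$. Since all entries then live in an exact finite extension of $\mathbb{Q}$, this is a terminating computer-algebra calculation that \emph{proves} the identity rather than merely corroborating it.

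I expect the recognition step to be the principal obstacle. The optimizer returns only approximate angles, and a priori there is no guarantee the true optimum is elementary; high-precision re-optimization (say to $50$--$100$ digits) will likely be needed before the integer-relation search locks onto the correct minimal polynomials, and some angles may satisfy nonobvious algebraic relations forced by the constraint $U=e^{i\phi}\,\mathrm{Toffoli}$ rather than being independently ``nice.'' To tame this I would exploit the symmetries of the Toffoli gate---it is real, and invariant under interchange of its two control qubits---to predict relations among the angles and to cut down the number of genuinely free parameters, which both shrinks the search and raises confidence that the recovered closed forms are the exact ones.
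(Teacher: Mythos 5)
Your proposal is correct in outline and shares the paper's overall strategy of turning a numerical observation into an exact, symbolically verified certificate, but the decisive step is handled by a genuinely different mechanism. The paper never attempts to recognize every angle: it takes the one-parameter family of Fig.~\ref{fig5} (produced by an iterative round--fix--retrain loop, which also yields the relations $\theta_3=\theta_2+\pi/2=\theta_1+\pi$), computes the $8\times 8$ matrix $U(x)$ \emph{symbolically} in the single remaining parameter $x=\theta_1$, observes that every off-target entry is a multiple of $(\cos\frac{x}{2}-\sin\frac{x}{2})(1+(1+\sqrt{2})\sin x)$, and then solves the resulting circuit equations exactly, which forces $\sin x = 1-\sqrt{2}$, i.e.\ $x=\arcsin(1-\sqrt{2})$. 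Your route instead recognizes all angles numerically and verifies by substitution; that verification is indeed rigorous once candidates are in hand, but the critical angle here is $\arcsin(1-\sqrt{2})$, which is not a rational multiple of $\pi$, so your inverse-symbolic search succeeds only if applied to sines/cosines of the angles (where an integer-relation search readily finds $1-\sqrt{2}$) rather than to the angles themselves --- precisely the contingency you flag as the principal obstacle. A second practical pitfall is your plan to re-parameterize every single-qubit gate with free Euler angles: the circuit has gauge freedoms (for instance, SQiSW commutes with $R_z\otimes R_z$), so raw optimizer output is generically not ``nice'' and varies run to run; the paper's iterative fixing-and-retraining, together with the parameter relations already baked into Fig.~\ref{fig5}, is what collapses this gauge freedom down to one parameter. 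In short, your approach buys a fully mechanical final verification, while the paper's buys robustness: the hard angle never has to be guessed, because it emerges as the analytic solution of the circuit equations.
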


Theorem \ref{thm4} is proved in a numerical-aided way. It is the numerical optimization on Toffoli gate that provides the observation of pattern of parameters in circuit. By repeatedly guessing, fixing and retraining, we get to have a circuit with only few parameters, which approximates Toffoli gate well. Then we prove it can be an exact synthesis scheme by assigning appropriate values to the parameters.

The rest of this paper is organized as follows. Section \ref{sec2} is about preliminaries. In section \ref{sec3}, we prove Theorem \ref{thm1} and recursively induce Theorem \ref{thm2}, our road map contains some decomposition schemes and circuit optimization techniques. In section \ref{sec4}, we show the numerical optimization algorithm and pruning algorithm. Then we prove Theorem \ref{thm3} and show the numerical results. In section \ref{sec5}, we prove Theorem \ref{thm4}, which is derived by numerical observation. Section \ref{sum} makes a summary.

%%%%%%%%%%%%%%%%%%%%%%%%%%%%%%%%%%%%%%

%%%%%%%%%%%%%%%%%%%%%%preliminaries%%%%%%%%%%%%%%%%%%%%%%%%%%
\section{Preliminaries}\label{sec2}
%%%%%%%%%%%%%%%%%%%%%%%%%%%%%%%%%%%%%%%%%%%%%%%%%%%%%%%%%%%%%%
Given a matrix $U$, we denote its $(i,j)$-th entry as $U_{ij}$. We also denote the conjugate transpose
of $U$ as $U^\dagger$. And we use $\otimes$ to denote the Kronecker tensor product over two matrices.

We denote the unitary group $U_n(\mathbb{C})$ as $U(n)$ and specialized unitary group $SU_n(\mathbb{C})$ as $SU(n)$. Here $\mathbb{C}$ is the field of complex numbers.
\subsection{Quantum multiplexor}
\begin{definition}[Quantum multiplexor \cite{Shende2006}]
    A quantum multiplexor with $s$ controlled qubits (positioned at the highest levels) and $d$ target qubits is a block-diagonal unitary matrix comprising $2^s$ unitary matrices $U_i\in U(2^d)$, $1\leq i \leq2^s$:
    \begin{equation*}
    U = 
    \begin{pmatrix}
        U_1&& \\
        &\ddots&\\
        &&U_{2^s}
    \end{pmatrix}.
\end{equation*}
\end{definition}

Here is a typical example: the multiplexor-$R_y$, which is a multiplexor with 1 data qubit acted by $R_y$ gate with different angles depending to the state of the controlled qubits. In the circuit diagram, we mark each controlled qubit of $U$ in quantum circuits with a ``$\boxed{}$'' symbol.
\begin{center}
    \scalebox{1.5}{
    \begin{tikzpicture}
    \begin{yquantgroup}
    \registers{
    qubit {} q[2];
    
    }
    \circuit{
    slash q[0];
    hspace {2mm} q[0];
    [this control/.append style={shape=yquant-rectangle,draw, /yquant/default fill}]
    box {$R_y$} q[1]| q[0];
    hspace {2mm} q[0];
    }
    \end{yquantgroup}
    \end{tikzpicture}
    }
\end{center}

\begin{lemma}[Decomposition of multiplexor rotation gates \cite{Shende2006,bullock2003,Mottonen2004}]For multiplexor-$R_k$ and $k=y,z$, it holds that

    \begin{center}
    \scalebox{1.5}{
    \begin{tikzpicture}
    \begin{yquantgroup}
    \registers{
    qubit {} q[3];
    
    }
    \circuit{
    slash q[1];
    hspace {2mm} q[1];
    [this control/.append style={shape=yquant-rectangle,draw, /yquant/default fill}]
    box {$R_k$} q[2]| q[0],q[1];
    hspace {2mm} q[1];
    }
    \equals
    \circuit{
    slash q[1];
    hspace {2mm} q[1];
    [this control/.append style={shape=yquant-rectangle,draw, /yquant/default fill}]
    box {$R_k$} q[2]|q[1];
    cnot q[2] | q[0];
    [this control/.append style={shape=yquant-rectangle,draw, /yquant/default fill}]
    box {$R_k$} q[2]|q[1];
    cnot q[2] | q[0];
    hspace {2mm} q[1];
    }
    \end{yquantgroup}
    \end{tikzpicture}
    }
    \end{center}
    and any $n$-qubit multiplexor-$R_k$ can be synthesized by at most $2^{n-1}$ CNOT gates. 
    \label{lemma1}
\end{lemma}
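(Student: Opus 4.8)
The plan is to prove both claims at once by induction on $n$: the single-control case is the base, the displayed circuit identity is the inductive step, and a CNOT-merging observation is what keeps the count at $2^{n-1}$ instead of the naive $2^{n}-2$.

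First I would settle the base case $n=2$, a one-control multiplexor $R_k(\theta_0)\oplus R_k(\theta_1)$ with control $c$ and target $t$. Writing $R_k(\theta)=e^{-i\theta\sigma_k/2}$ and using $X\sigma_k X=-\sigma_k$, which holds for $k\in\{y,z\}$ but not $k=x$, gives $X R_k(\theta) X=R_k(-\theta)$. Hence the circuit $R_k(\alpha)\,\mathrm{CNOT}_{c\to t}\,R_k(\beta)\,\mathrm{CNOT}_{c\to t}$ acts as $R_k(\alpha+\beta)$ on the $c=0$ block and as $R_k(\alpha-\beta)$ on the $c=1$ block, so choosing $\alpha=(\theta_0+\theta_1)/2$ and $\beta=(\theta_0-\theta_1)/2$ reproduces the target with exactly two CNOTs, matching $2^{n-1}$ at $n=2$. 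This is the displayed identity with an empty bundle $q[1]$, and it is exactly where the restriction $k\in\{y,z\}$ enters.

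Next I would verify the displayed identity for a nonempty bundle by conditioning on the split-off control $q_0$: the right-hand side equals $A'B'$ on the $q_0=0$ subspace and $X_t A' X_t\,B'$ on the $q_0=1$ subspace, where $A',B'$ are the two bundle-controlled sub-multiplexors. Since conjugation by $X_t$ negates every rotation angle, one solves for $A',B'$ block-by-block, reusing the base-case relation, to match the two halves of the target. Used naively this gives $C(n)=2C(n-1)+2$, hence $C(n)=2^{n}-2$, which is too large. The saving comes from the fact that CNOTs with a common target commute and square to the identity: decomposing the first sub-multiplexor so that its last gate is $\mathrm{CNOT}_{q_1\to t}$ and the second (in reflected order) so that its first gate is $\mathrm{CNOT}_{q_1\to t}$, the three consecutive gates $\mathrm{CNOT}_{q_1\to t}\,\mathrm{CNOT}_{q_0\to t}\,\mathrm{CNOT}_{q_1\to t}$ straddling the seam collapse to $\mathrm{CNOT}_{q_0\to t}$. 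This removes two CNOTs per step, improving the recursion to $C(n)=2C(n-1)$ for $n\ge3$, which with $C(2)=2$ yields $C(n)=2^{n-1}$.

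The main obstacle I expect is the bookkeeping that legitimizes this merge at every level simultaneously. I must confirm that the reflected decomposition of the second sub-multiplexor is itself valid (it is, by the base computation with the two angles interchanged), and, more delicately, that after the collapse the assembled $n$-qubit block still terminates in a CNOT on the control needed for the next level up, so that a further merge is not obstructed. The cleanest way to certify this globally is to recognize the fully expanded circuit as the Gray-code decomposition of \cite{Mottonen2004,bullock2003}: the $2^{n-1}$ CNOT controls follow the bit that flips along a Hamiltonian cycle of the $(n-1)$-cube, the angles are recovered from $(\theta_0,\dots,\theta_{2^{n-1}-1})$ by an invertible $\pm1$ transform, and the cycle structure makes the seam cancellations automatic. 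I would invoke this picture to certify both the exact count $2^{n-1}$ and the solvability for the single-qubit angles, with the inductive argument above supplying the conceptual derivation.
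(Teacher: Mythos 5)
Your proposal is correct, but there is nothing in the paper to compare it against: the paper states this lemma as a known result, attributing it to \cite{Shende2006,bullock2003,Mottonen2004}, and gives no proof of its own. What you have written is essentially a reconstruction of the standard argument from those references, and it holds up. The base case is right ($X\sigma_k X=-\sigma_k$ for $k\in\{y,z\}$, so $R_k(\alpha)\,\mathrm{CNOT}\,R_k(\beta)\,\mathrm{CNOT}$ realizes the one-control multiplexor via the invertible sum/difference change of angles, and this is exactly where $k\neq x$ is used); the recursive split of one control is right; and the seam cancellation $\mathrm{CNOT}_{q_1\to t}\,\mathrm{CNOT}_{q_0\to t}\,\mathrm{CNOT}_{q_1\to t}=\mathrm{CNOT}_{q_0\to t}$ is legitimate because equal-target CNOTs commute and are involutions, turning the naive recursion $C(n)=2C(n-1)+2$ into $C(n)=2C(n-1)$ with $C(2)=2$, hence $2^{n-1}$. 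The one soft spot is that you outsource the ``merge is legal at every level'' bookkeeping to the Gray-code decomposition of \cite{Mottonen2004,bullock2003}, which comes close to citing the very result being proved. That crutch is unnecessary: strengthen the induction hypothesis to the invariant that the constructed decomposition of an $m$-control multiplexor begins with a rotation and ends with a CNOT controlled on its outermost control. The base case satisfies this; the reflected copy used for the second sub-multiplexor then begins with that CNOT, the three gates at the seam collapse, and the assembled circuit ends with $\mathrm{CNOT}_{q_0\to t}$, i.e., with a CNOT on the newly added outermost control, so the invariant propagates and every level's merge is unobstructed. Solvability of the angles is preserved throughout, since each level composes invertible $\pm 1$ linear maps on the angle vectors. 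With that invariant made explicit, your induction is a complete, self-contained proof of both the displayed identity and the $2^{n-1}$ count --- something the paper, which relies purely on citation, does not supply.
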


%%%%%%%%%%%%%%%%%%%%%%%%%%%%%%%%%%%%%%%%%%%%%%%%%%%%%%%%%%%%%%%
\subsection{Local equivalence and Weyl chamber}
Since single-qubit gates contribute little to the cost of a quantum circuit, it is essential to study the equivalence between quantum gates that only differ in single-qubit gates on sides. Fortunately, KAK decomposition and Weyl chamber provide mathematical tools to depict 2-qubit gates up to single-qubit gates \cite{tucci2005,byron,cartan,Huang2023,wyle,geo}. 
\begin{lemma}[KAK decomposition \cite{tucci2005}]
    $\forall U\in SU(4)$, there exists a unique $\Vec{k}=(x,y,z)$, $\frac{\pi}{4}\geq x\geq y\geq |z|$, single-qubit gates $A_0, A_1, B_0, B_1\in SU(2)$ and $g\in \{1, i\}$ s.t. 
    \begin{equation*}
        U = g\cdot (A_0\otimes A_1) exp\{i\Vec{k}\cdot \Vec{\sigma}\} (B_0\otimes B_1),
    \end{equation*}
    in which $\Vec{\sigma} \equiv [X\otimes X, Y\otimes Y, Z\otimes Z]$. 
\end{lemma}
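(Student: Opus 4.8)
The plan is to recognize the statement as the Cartan ($KAK$) decomposition of the compact Lie group $SU(4)$, in which $K=SU(2)\otimes SU(2)$ is the subgroup of local (single-qubit) operations and $A=\{\exp\{i\vec k\cdot\vec\sigma\}\}$ is a maximal torus of the non-local part. I would carry this out in the \emph{magic basis}, fixing the unitary
\[
M=\frac{1}{\sqrt2}\begin{pmatrix}1&0&0&i\\0&i&1&0\\0&i&-1&0\\1&0&0&-i\end{pmatrix},
\]
which simultaneously diagonalizes $X\otimes X$, $Y\otimes Y$ and $Z\otimes Z$. Consequently $M^\dagger\exp\{i\vec k\cdot\vec\sigma\}M=D$ is a diagonal phase matrix, and all of the analytic work is transferred to understanding how the two factor groups transform under conjugation by $M$.

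The first key step is to show that conjugation by $M$ carries the local subgroup exactly onto the real orthogonal group: $M^\dagger(A_0\otimes A_1)M\in SO(4)$ for every $A_0,A_1\in SU(2)$, and conversely every element of $SO(4)$ arises this way. I would verify this infinitesimally, checking that $M^\dagger(iP\otimes I)M$ and $M^\dagger(iI\otimes P)M$ are real antisymmetric for each Pauli $P$, so that the image of $\mathfrak{su}(2)\oplus\mathfrak{su}(2)$ lies in $\mathfrak{so}(4)$; a dimension count ($3+3=6=\dim SO(4)$) together with connectedness then promotes this to surjectivity, recovering the familiar two-to-one cover $SU(2)\times SU(2)\to SO(4)$.

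With this dictionary in place, set $\tilde U=M^\dagger U M$ and form $S=\tilde U^{\,T}\tilde U$. If $U$ decomposes as claimed then $\tilde U=O_A D O_B$ with $O_A,O_B\in SO(4)$ and $D$ diagonal, whence $S=O_B^{\,T}D^2O_B$; so existence reduces to diagonalizing $S$ by a real orthogonal matrix. For this I would use that $S$ is both unitary and symmetric: writing $S=S_R+iS_I$ with $S_R,S_I$ real symmetric, the identity $\bar S=S^{-1}$ gives $S\bar S=I$, whose imaginary part forces $S_RS_I=S_IS_R$. Commuting real symmetric matrices are simultaneously diagonalized by some $O_B\in SO(4)$, yielding $S=O_B^{\,T}\Lambda O_B$ with $\Lambda=D^2$ diagonal unitary. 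Choosing a diagonal square root $D$ of $\Lambda$ and setting $O_A=\tilde U O_B^{\,T}D^{-1}$, one checks $O_A^{\,T}O_A=I$; since $O_A$ is then simultaneously unitary and orthogonal it must be real, so $O_A\in O(4)$. Conjugating $\tilde U=O_A D O_B$ back by $M$ and absorbing any $\det=-1$ and $SU$-versus-$U(2)$ mismatches into a scalar prefactor produces the local factors and the global phase $g\in\{1,i\}$.

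It remains to pin down $\vec k$ and its range, and this is where I expect the real care to lie, alongside the magic-basis and phase bookkeeping of the previous steps. The angles $(x,y,z)$ are read from the eigenphases of $\Lambda=D^2=M^\dagger\exp\{2i\vec k\cdot\vec\sigma\}M$, so they are invariants of $U$ up to the residual freedom in the construction; that freedom is exactly the Weyl group, acting on $(x,y,z)$ by coordinate permutations and even sign changes, both realized by conjugating the torus with suitable local Clifford gates. Restricting to the fundamental domain $\tfrac\pi4\ge x\ge y\ge|z|$ then selects a unique representative and yields uniqueness of $\vec k$. The principal obstacle, I expect, is the clean establishment of $M^\dagger(SU(2)\otimes SU(2))M=SO(4)$ together with tracking determinant signs and global phases so that the factors land precisely in $SU(2)$ and $g$ is confined to $\{1,i\}$; by comparison the complex-symmetric diagonalization and the Weyl-chamber uniqueness are routine once this group-theoretic dictionary is fixed.
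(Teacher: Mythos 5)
The paper never proves this lemma: it is imported as background directly from the cited reference \cite{tucci2005}, so there is no internal proof to compare against. Measured against the standard literature argument, your proposal \emph{is} that argument (magic-basis conjugation, reduction to diagonalizing the complex symmetric unitary $S=\tilde U^{T}\tilde U$ by a real orthogonal matrix, square-root extraction, Weyl-group quotient), and the existence half is essentially sound: the dictionary $M^\dagger\bigl(SU(2)\otimes SU(2)\bigr)M=SO(4)$, the commutation $S_RS_I=S_IS_R$, the simultaneous orthogonal diagonalization, and the ``unitary plus complex-orthogonal implies real'' step are all correct and standard.

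The step that needs repair is the range and uniqueness of $\vec{k}$, together with where $g$ comes from. The finite group you invoke --- coordinate permutations and even sign changes, realized by conjugation with local Cliffords --- has fundamental domain $\{x\ge y\ge |z|\}$, which is \emph{unbounded}; conjugations alone can never bring an arbitrary $\vec{k}$ into the compact chamber $\pi/4\ge x\ge y\ge|z|$. The missing generators are the single-coordinate translations by $\pi/2$ (Property~\ref{prop1} of the paper), and these are not conjugations: $\exp\{i(x+\tfrac{\pi}{2})X\otimes X\}=i\,(X\otimes X)\exp\{ixX\otimes X\}$, and $i\,(X\otimes X)=-i\,(iX)\otimes(iX)$, so each translation costs a factor of $\pm i$ that must be carried as a global phase. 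This is exactly where $g\in\{1,i\}$ becomes unavoidable; it is \emph{not} the determinant bookkeeping of the magic-basis step, where one can in fact always arrange $O_A,O_B\in SO(4)$ and $\det D=1$, and any $\det D=1$ diagonal lies in the image of the torus $\{M^\dagger\exp\{i\vec{k}\cdot\vec{\sigma}\}M\}$ with no phase at all (push the $2\pi m$ discrepancy in $\sum_j\theta_j$ into a single eigenphase). So with unrestricted $\vec{k}$ you get $g=1$ for free, and the phase is created precisely by compactifying the domain --- your write-up attributes it to the wrong step and then asserts the compactification can be done by conjugation, which it cannot. Once the uniqueness argument is run modulo the full affine group (permutations, even sign changes, \emph{and} $\pi/2$ translations), the proof closes, apart from the boundary caveat that at $x=\pi/4$ the points $(\pi/4,y,z)$ and $(\pi/4,y,-z)$ are identified, so uniqueness as literally stated requires the extra convention $z\ge 0$ there --- a defect of the lemma's phrasing rather than of your argument.
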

The vector $\Vec{k}$ is called \emph{interaction coefficients}, which characterize the equivalent class of an arbitrary $U\in SU(4)$, denoted as $k(U)$. The 3-dimension area related to $\Vec{k}$ is known as \emph{Weyl chamber} \cite{tucci2005,wyle,Huang2023}. The Weyl chamber $W$ is defined as
$
    W \equiv \{\frac{\pi}{4}\geq x\geq y\geq z~~\mathrm{and}~~z\geq 0~~\mathrm{if}~~x = \frac{\pi}{4}~|~(x,y,z)\in \mathbb{R}^3\},
$

shown in Fig.~\ref{fig1} with some common gates and their interaction coefficients marked.

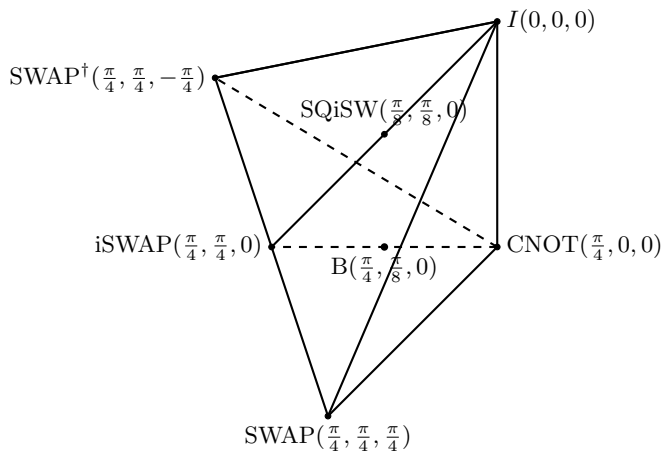
\begin{figure*}[!htbp]
    \centering
    \begin{tikzpicture}[scale = 1.1]
    \draw[thick] (2,4)node[right]{$I(0,0,0)$}--(2,0)node[right]{$\mathrm{CNOT(\frac{\pi}{4},0,0)}$}--(-1,-3)node[below]{$\mathrm{SWAP(\frac{\pi}{4},\frac{\pi}{4},\frac{\pi}{4})}$}--(-2,0)node[left]{$\mathrm{iSWAP(\frac{\pi}{4},\frac{\pi}{4},0)}$}--(-3,3)node[left]{$\mathrm{SWAP^\dagger(\frac{\pi}{4},\frac{\pi}{4},-\frac{\pi}{4})}$}--(2,4);
    \draw[thick] (2,4)--(-1,-3);
    \draw[thick] (2,4)--(-2,0);
    \draw[thick] (2,4)--(-3,3);
    \draw[thick,dashed] (2,0)--(-2,0);
    \draw[thick,dashed] (2,0)--(-3,3);
    \draw[thick] (0,0)node[below]{$\mathrm{B(\frac{\pi}{4},\frac{\pi}{8},0)}$};
    \draw[thick] (0,2)node[above]{$\mathrm{SQiSW(\frac{\pi}{8},\frac{\pi}{8},0)}$};
    \draw[fill] (-2,0) circle (.05);
    \draw[fill] (2,0) circle (.05);
    \draw[fill] (2,4) circle (.05);
    \draw[fill] (-3,3) circle (.05);
    \draw[fill] (-1,-3) circle (.05);
    \draw[fill] (0,2) circle (.05);
    \node[dot] (0,0) {};
    % \node[dot,thick,above] (0,2) {$SQiSW(\frac{\pi}{8},\frac{\pi}{8},0)$};
    % \draw[thick] (0,2)node[above]{$SQiSW(\frac{\pi}{8},\frac{\pi}{8},0)$};
    
    \end{tikzpicture}
    
    \caption{Weyl chamber in $\mathbb{R}^3$, with some common gates and their interaction coefficients. Note that there are many ways to draw Weyl chamber, and here we adopt that in \cite{Huang2023}.}
    % \fignote{对图片的注释}
    \label{fig1}

\end{figure*}

\begin{definition}[Local equivalence \cite{Huang2023}]
    Two unitary matrices $U$ and $V$ are said to be locally equivalent, if and only if $U$ and $V$ share the same interaction coefficients in Weyl chamber.
\end{definition}
The following properties are of utmost importance when addressing interaction coefficients.

\begin{property}[Transformation of interaction coefficients \cite{tucci2005}]\label{prop1}
    The local equivalent class of a unitary matrix $U$ remains the same under such transformation on its interaction coefficients:
    \begin{itemize}
        \item Add or minus one element of $(x,y,z)$ by $\pi/2$. For example, add $x$ by $\pi/2$ while keep $y$ and $z$:
        \begin{equation*}
            (x,y,z) \mapsto (x+\pi/2,y,z).
        \end{equation*}
        \item Take the opposite number of two elements of $(x,y,z)$. For example, take the opposite number of $x$ and $y$, while keep $z$:
        \begin{equation*}
            (x,y,z) \mapsto (-x,-y,z).
        \end{equation*}
        \item Exchange two elements of $(x,y,z)$. For example, exchange $x$ and $y$, while keep $z$:
        \begin{equation*}
            (x,y,z) \mapsto (y,x,z).
        \end{equation*}
    \end{itemize}
\end{property}
In fact, if we take the equivalence under transformation as $\approx$, then \cite{tucci2005,geo}
\begin{equation*}
    W = \mathbb{R}^3 / \approx.
\end{equation*}

So we can easily round the interaction coefficients of any arbitrary $U\in SU(4)$ to one and only one position in Weyl chamber. 

%%%%%%%%%%%%%%%%%%%%%%%%%%%%%%%%%%%%%%%%%%%%%%%%%%%%%%%%%%%%%%%
\subsection{SQiSW gate and its properties}
SQiSW gate is defined as the matrix square-root of iSWAP gate, that is,
\begin{align*}
\mathrm{SQiSW}
&\equiv
\mathrm{\sqrt{iSWAP}}\\
&=
\begin{pmatrix}
1&0&0&0 \\
0&\frac{1}{\sqrt{2}}&\frac{i}{\sqrt{2}}&0\\
0&\frac{i}{\sqrt{2}}&\frac{1}{\sqrt{2}}&0 \\
0&0&0&1
\end{pmatrix}.
\end{align*}
Besides, SQiSW has such mathematical properties which are useful in the rest part of this paper. Here we skip the proof since it's easy to verify them.
\begin{property}[Mathematical Properties of SQiSW]\label{prop2}
SQiSW gate has such properties:
    \begin{itemize}
    \item SQiSW gate is commutative with $Z\otimes Z$, thus commutative with $R_z(\theta)\otimes R_z(\theta)$.
    \item SQiSW gate is qubit symmetrical. It satisfies that SQiSW = SWAP $\cdot$ SQiSW $\cdot$ SWAP.
    \item SQiSW$^\dagger$ is locally equivalent to SQiSW.
    \end{itemize}
\end{property}

It's been proved that any arbitrary $U\in SU(4)$ can be synthesized by at most 3 SQiSW gates \cite{Huang2023}, and the region that can be spanned by 2 SQiSW gates is stained red in Fig.~\ref{fig2}. The rest part needs at least 3 SQiSW gates. We rewrite it as the Lemma \ref{lemma3} below.

\begin{figure*}[!htbp]
    \centering
    \begin{tikzpicture}[scale = 1.1]
    \filldraw[red] (2,4)--(-2,0)--(0.5,-1.5)--(2,0)--(2,4);
    \draw[thick] (2,4)node[right]{$I(0,0,0)$}--(2,0)node[right]{$\mathrm{CNOT(\frac{\pi}{4},0,0)}$}--(-1,-3)node[below]{$\mathrm{SWAP(\frac{\pi}{4},\frac{\pi}{4},\frac{\pi}{4})}$}--(-2,0)node[left]{$\mathrm{iSWAP(\frac{\pi}{4},\frac{\pi}{4},0)}$}--(-3,3)node[left]{$\mathrm{SWAP^\dagger(\frac{\pi}{4},\frac{\pi}{4},-\frac{\pi}{4})}$}--(2,4);
    \draw[thick] (2,4)--(-1,-3);
    \draw[thick] (2,4)--(-2,0);
    \draw[thick] (2,4)--(-3,3);
    \draw[thick,dashed] (2,0)--(-2,0);
    \draw[thick,dashed] (2,0)--(-3,3);
    \draw[thick,dashed] (2,4)--(-0.4,1.44)node[above left]{$(\frac{\pi}{4},\frac{\pi}{8},-\frac{\pi}{8})$};
    \draw[thick,dashed] (-2,0)--(-0.4,1.44);
    \draw[thick,dashed] (2,4)--(0.5,-1.5)node[right]{$(\frac{\pi}{4},\frac{\pi}{8},\frac{\pi}{8})$};
    \draw[thick,dashed] (-2,0)--(0.5,-1.5);
    \draw[thick] (0,0)node[below]{$\mathrm{B(\frac{\pi}{4},\frac{\pi}{8},0)}$};
    % \draw[thick] (0,2)node[above]{$SQiSW(\frac{\pi}{8},\frac{\pi}{8},0)$};
    \draw[fill] (-0.4,1.44) circle (.05);
    \draw[fill] (0.5,-1.5) circle (.05);
    \draw[fill] (-2,0) circle (.05);
    \draw[fill] (2,0) circle (.05);
    \draw[fill] (2,4) circle (.05);
    \draw[fill] (-3,3) circle (.05);
    \draw[fill] (-1,-3) circle (.05);
    % \draw[fill] (0,2) circle (.05);
    \node[dot] (0,0) {};
    % \node[dot] (0.4,-1.44) {};
    % \node[dot] (0,-2) {};
    % \node[dot,thick,above] (0,2) {$SQiSW(\frac{\pi}{8},\frac{\pi}{8},0)$};
    % \draw[thick] (0,2)node[above]{$SQiSW(\frac{\pi}{8},\frac{\pi}{8},0)$};
    
    \end{tikzpicture}
    \caption{Region in Weyl chamber that can be spanned by 2 SQiSW gates, filled in red \cite{Huang2023}.}
    % \fignote{对图片的注释}
    \label{fig2}
\end{figure*}
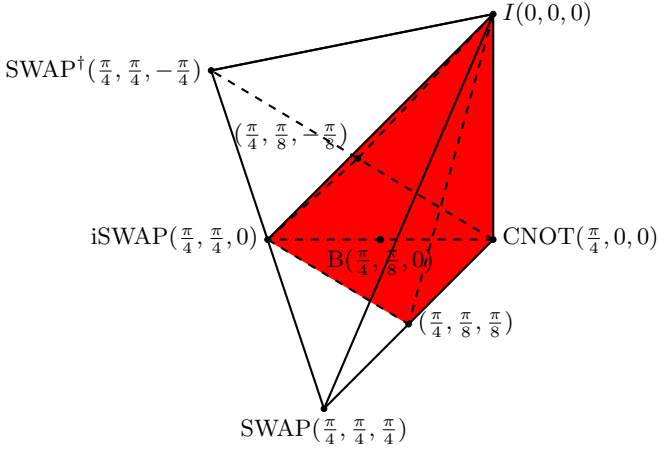

\begin{lemma}[Synthesis of 2-qubit gates using SQiSW \cite{Huang2023}]
    Any arbitrary 2-qubit gate $U\in SU(4)$ can be exactly synthesized by at most 3 SQiSW gates, up to single qubit gates. $U$ can be synthesized by 1 SQiSW gate, if and only if $k(U)$ is $(\frac{\pi}{8},\frac{\pi}{8},0)$. $U$ can be synthesized by at most 2 SQiSW gates, if and only if $k(U)$ is in $W^{\prime}$, where $
        W^{\prime} \equiv \{\frac{\pi}{4}\geq x\geq y\geq |z|~\mathrm{and}~x\geq y+|z|~|~(x,y,z)\in W\}.
    $

    \label{lemma3}
\end{lemma}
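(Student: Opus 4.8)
The plan is to translate the statement entirely into the Weyl chamber, since whether $U$ can be built from $m$ SQiSW gates and arbitrary single-qubit gates depends only on its local equivalence class $k(U)$. Any such circuit has the form $V_0\,\mathrm{SQiSW}\,V_1\cdots V_{m-1}\,\mathrm{SQiSW}\,V_m$ with each $V_i$ a tensor product of single-qubit gates, and the outermost factors $V_0,V_m$ leave $k(U)$ unchanged, so it suffices to describe the image in $W$ of the map that sends the \emph{inner} single-qubit layers to the interaction coefficients of the product. With this reduction the one-gate claim is immediate: a single SQiSW dressed by single-qubit gates stays in its own class $(\tfrac{\pi}{8},\tfrac{\pi}{8},0)$, and conversely any $U$ with $k(U)=(\tfrac{\pi}{8},\tfrac{\pi}{8},0)$ is by definition locally equivalent to SQiSW and hence synthesizable by one gate.

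For the two-gate case I would pass to the magic (Bell) basis, the standard device for Weyl-chamber computations, because it simultaneously diagonalizes $X\otimes X$, $Y\otimes Y$ and $Z\otimes Z$. There every single-qubit layer $A\otimes B$ becomes a real matrix $O\in SO(4)$, while $\mathrm{SQiSW}=\exp\{i\tfrac{\pi}{8}(X\otimes X+Y\otimes Y)\}$ becomes diagonal; computing the eigenvalues $\{0,0,2,-2\}$ of $X\otimes X+Y\otimes Y$ gives $D=\mathrm{diag}(1,1,e^{i\pi/4},e^{-i\pi/4})$ up to ordering. The two-gate circuit is then $\tilde U=D\,O\,D$, whose local invariants are carried by the eigenvalues of $\tilde U^{T}\tilde U$; a similarity transformation by $D$ shows these coincide with the eigenvalues of $O^{T}\Lambda O\Lambda$, where $\Lambda=D^{2}=\mathrm{diag}(1,1,i,-i)$. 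Recovering $(x,y,z)$ from these four unit-modulus eigenvalues through the usual correspondence, the whole problem reduces to a single question: as $O$ ranges over $SO(4)$, what is the range of the spectrum of $O^{T}\Lambda O\Lambda$?

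To make this concrete I would first trim the free parameters using Property~\ref{prop2}. Because SQiSW commutes with $R_z(\theta)\otimes R_z(\theta)$, the equal-angle parts of the diagonal $R_z$ factors in the inner layer can be commuted through the adjacent SQiSW and absorbed into the (irrelevant) outer layers, and together with the qubit symmetry $\mathrm{SQiSW}=\mathrm{SWAP}\cdot\mathrm{SQiSW}\cdot\mathrm{SWAP}$ this collapses the six parameters of $A\otimes B$ to a canonical form with only a few essential angles. I would then write the characteristic polynomial of $O^{T}\Lambda O\Lambda$ in those angles and read off $(x,y,z)$. Proving the image lies in $W'$ reduces to verifying the single inequality $x\ge y+|z|$ on the resulting coordinates, since Property~\ref{prop1} already rounds every point into $W$; proving the image is all of $W'$ requires exhibiting, for each target in $W'$, an explicit angle assignment, obtained by inverting the coordinate formulas on the interior and on the boundary face $x=y+|z|$.

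I expect the crux to be exactly this tightness-and-surjectivity step: showing that the reachable set is \emph{precisely} $W'$, i.e. that the face $x=y+|z|$ is attained while no point with $x<y+|z|$ can arise from any $O\in SO(4)$. This is an algebraic/optimization statement about the spectrum of a low-parameter family of unitaries, and it is where the special geometry of $\Lambda=\mathrm{diag}(1,1,i,-i)$ does the real work. The universality claim that three SQiSW gates always suffice then follows by a short reachability argument: for any target $k(U)\in W$ one prepends a single SQiSW (with suitable single-qubit dressing) so that the residual gate has coordinates in $W'$, after which the two-gate result completes the synthesis.
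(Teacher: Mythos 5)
This lemma is never proved in the paper: it is imported verbatim from \cite{Huang2023}, so there is no in-paper argument to compare against, and your proposal has to stand on its own as a complete proof. Its framework is the right one, and in fact it mirrors the cited reference: reduce everything to local equivalence classes, pass to the magic basis where a single-qubit layer $A\otimes B$ becomes an $O\in SO(4)$ and $\mathrm{SQiSW}=\exp\{i\tfrac{\pi}{8}(X\otimes X+Y\otimes Y)\}$ becomes the diagonal $D=\mathrm{diag}(1,1,e^{i\pi/4},e^{-i\pi/4})$ (up to ordering), then read the class of $D\,O\,D$ off the spectrum of $O^{T}\Lambda O\Lambda$ with $\Lambda=D^{2}$. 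Your bookkeeping here (eigenvalues of $X\otimes X+Y\otimes Y$, the similarity argument, the parameter reduction via Property~\ref{prop2}) is correct, and the one-SQiSW equivalence is indeed immediate once the problem is phrased in the Weyl chamber.

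The genuine gap is that the two load-bearing claims of the lemma are named but never established. First, the two-gate characterization: you correctly reduce it to determining the range of the spectrum of $O^{T}\Lambda O\Lambda$ as $O$ ranges over $SO(4)$, and then explicitly defer this as ``the crux'' --- neither direction is carried out, i.e.\ you prove neither that every reachable class satisfies $x\geq y+|z|$ (the impossibility half) nor that every point of $W^{\prime}$, including the boundary face $x=y+|z|$, is actually attained (the surjectivity half). That computation, not the reduction preceding it, \emph{is} the content of the lemma; without it you have shown nothing beyond the one-gate case. Second, the three-gate universality: you assert that for any $k(U)\in W$ one can prepend a single SQiSW with suitable dressing so that the residual gate lands in $W^{\prime}$. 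This is again precisely what must be proved --- it is a nontrivial statement about how composition with one SQiSW moves points of the Weyl chamber, and in \cite{Huang2023} it requires an explicit construction (their Theorem on 3-SQiSW synthesis gives the decomposition concretely). As written, your text is a correct and well-organized plan whose two essential steps are placeholders, so it does not yet constitute a proof of the lemma.
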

This indicates that 3 SQiSW gates can span the whole Weyl chamber, and 2 SQiSW gates span the red area of Weyl chamber in Fig.~\ref{fig2}.

%%%%%%%%%%%%%%%%%%%%%%%%%%%%%%%%%%%%%%%%%%%%%%%%%%%%%%%%%%%%%%
\section{Exact synthesis of arbitrary quantum gates}\label{sec3}
%%%%%%%%%%%%%%%%%%%%%%%%%%%%%%%%%%%%%%%%%%%%%%%%%%%%%%%%%%%%%%
In this section, we prove that the upper bound of arbitrary 3-qubit synthesis is 24, then recursively derive the upper bound of arbitrary $n$-qubit synthesis, which is $\frac{139}{192}4^n(1+o(1))$. 
\subsection{Exact synthesis of arbitrary 3-qubit gates}

We first introduce two lemmas, namely Lemma~\ref{lemma5} and Lemma~\ref{lemma6}, as the foundation for the proof of Theorem~\ref{thm1}.

\begin{lemma}[Decomposition of arbitrary 3-qubit gates \cite{chen2023}]\label{lemma5}
    Any 3-qubit gate can be synthesized by at most 11 2-qubit gates, a scheme shown as below.
    \begin{widetext}
    \begin{center}
    \scalebox{1.2}{
    \begin{tikzpicture}
        \begin{yquantgroup}
            \registers{
            qubit {} q[3];
            }
            \circuit{
            hspace {2mm} q[1];
            % [this control/.append style={shape=yquant-rectangle,draw, /yquant/default fill}]
            box {} (q[0-2]);
            hspace {2mm} q[1];
            }
            \equals
            \circuit{
            % qubit {} q[3];
            hspace {2mm} q[0-2];
            box {$V_2$} (q[1-2]);
            
            hspace {2mm} q[0-2];
            box {$\Delta$} (q[0,2]);
            
            hspace {2mm} q[0-2];
            box {$\Delta$} (q[0-1]);
            hspace {2mm} q[0-2];
            box {$V_1$} (q[1-2]);
            hspace {2mm} q[0-2];
            box {$V_6$} (q[0-1]);

            zz (q[0,2]);
            box {$V_5$} (q[0-1]);
            hspace {2mm} q[0-2];
            box {$V_3$} (q[1-2]);
            hspace {2mm} q[0-2];
            box {$\Delta$} (q[0-1]);
            hspace {2mm} q[0-2];
            box {$\Delta$} (q[0,2]);
            hspace {2mm} q[0-2];
            box {$V_4$} (q[1-2]);
            hspace {2mm} q[0-2];
            }
        \end{yquantgroup}
    \end{tikzpicture}
    }
\end{center}
\end{widetext}
\end{lemma}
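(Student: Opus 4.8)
The plan is to realise the stated circuit constructively by peeling the top wire $q[0]$ through two rounds of cosine--sine/Quantum Shannon type decomposition, and then collapsing the resulting uniformly controlled rotations into the four $\Delta$ boxes and the single central $zz$ interaction by absorbing all leftover single-qubit gates into neighbouring two-qubit blocks (this reconstructs the construction of \cite{chen2023}). Concretely, I would first apply the cosine--sine decomposition, a standard matrix factorisation, with respect to $q[0]$ to write an arbitrary $U\in U(8)$ as
\begin{equation*}
    U=(A\oplus B)\,\Theta\,(C\oplus D),
\end{equation*}
where $A,B,C,D\in U(4)$ act on the lower pair $(q[1],q[2])$, the block-diagonal factors $A\oplus B$ and $C\oplus D$ are multiplexors controlled by $q[0]$, and $\Theta$ is a uniformly controlled $R_y$ acting on $q[0]$ with controls $q[1],q[2]$.

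Next I would demultiplex the two block-diagonal factors. Diagonalising $AB^{\dagger}=Ve^{2i\Theta_z}V^{\dagger}$ gives $A=Ve^{i\Theta_z}W$ and $B=Ve^{-i\Theta_z}W$ for suitable $V,W\in U(4)$, so that
\begin{equation*}
    A\oplus B=(I_2\otimes V)\,\Delta_z\,(I_2\otimes W),
\end{equation*}
with $\Delta_z=\mathrm{diag}(e^{i\Theta_z},e^{-i\Theta_z})$ a uniformly controlled $R_z$ on $q[0]$. After this step every factor is either a genuine two-qubit gate on the lower pair $(q[1],q[2])$ --- these become $V_1,\dots,V_4$ --- or one of the three two-control single-qubit rotations $\Delta_z,\ \Theta(=\Delta_y),\ \Delta_z'$ on $q[0]$.

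The heart of the argument is to turn those three two-control rotations into the observed pattern of two-qubit boxes. Using the identity of Lemma~\ref{lemma1}, each two-control multiplexed rotation splits into two one-control multiplexed rotations --- arbitrary two-qubit gates on the $(q[0],q[1])$ pair, which merge into $V_5,V_6$ or into $\Delta$ boxes --- interleaved with entangling gates on the $(q[0],q[2])$ pair. I would then commute and fuse these couplings: the couplings coming from the two outer $R_z$-multiplexors remain as the $\Delta$ pairs at the two ends, while the couplings flanking the central $R_y$-multiplexor combine, up to single-qubit gates, into the single $zz$ interaction on $(q[0],q[2])$. Greedily absorbing every single-qubit gate into the nearest two-qubit box and relabelling then yields exactly six arbitrary two-qubit gates $V_1,\dots,V_6$, four $\Delta$ gates and one $zz$ gate, i.e.\ $11$ two-qubit gates.

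I expect the main obstacle to be the commuting-and-fusing bookkeeping of the last step: one must choose the order of demultiplexing and the placement of the conjugating couplings so that the three uniformly controlled rotations leave behind only a \emph{single} central $zz$ rather than several entangling gates, and must verify that no two-qubit gate is silently created or destroyed when single-qubit gates are pushed across the $\Delta$ and $zz$ boxes. As a consistency check, a parameter count confirms feasibility: the six arbitrary two-qubit gates alone carry $6\times 15=90$ real parameters, already exceeding $\dim SU(8)=63$, so $11$ two-qubit gates are more than enough degrees of freedom to reach a generic three-qubit target.
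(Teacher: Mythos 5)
The paper never proves this lemma itself: it is imported verbatim as Theorem~12 of \cite{chen2023}, so your argument has to stand on its own --- and it does not. The middle third of your plan is in fact sound: since $Z R_y(\theta) Z = R_y(-\theta)$, the two-controlled multiplexed $R_y$ splits \emph{exactly} as $\bigl[\text{multiplexed }R_y(q_0;q_1)\bigr]\cdot \mathrm{CZ}(0,2)\cdot\bigl[\text{multiplexed }R_y(q_0;q_1)\bigr]\cdot \mathrm{CZ}(0,2)$ with no local corrections, and the trailing $\mathrm{CZ}$ is absorbed into the adjacent multiplexor via $(A\oplus B)\cdot \mathrm{CZ}(0,2) = A\oplus(B\,Z_{q_2})$, leaving exactly $V_6,\,zz,\,V_5$. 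The fatal step is the ends. Demultiplexing produces two-controlled multiplexed $R_z$ gates on $q_0$, and for those the $\mathrm{CZ}$ trick fails ($Z$ commutes with $R_z$, so conjugation by $\mathrm{CZ}$ does nothing); Lemma~\ref{lemma1} then forces genuine CNOTs with target $q_0$, which are neither diagonal nor absorbable into neighbouring $(q_1,q_2)$ gates. The obstruction is structural, not bookkeeping: in the Walsh basis the demultiplexed diagonal is
\begin{equation*}
\exp\Bigl\{i\bigl(c_0\, Z\!\otimes\! I\!\otimes\! I + c_1\, Z\!\otimes\! Z\!\otimes\! I + c_2\, Z\!\otimes\! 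I\!\otimes\! Z + c_3\, Z\!\otimes\! Z\!\otimes\! Z\bigr)\Bigr\},
\end{equation*}
with $c_3\neq 0$ generically, whereas the end pattern $\Delta(0,2)\cdot\Delta(0,1)$ produces only phases of the form $f(a,c)+g(a,b)$, i.e.\ $c_3=0$, and conjugation by the outer $(q_1,q_2)$ gates cannot create a three-body term. Equivalently, the end block $[V_2,\Delta(0,2),\Delta(0,1),V_1]$ is a $q_0$-multiplexor $L_0\oplus L_1$ whose ratio $L_0L_1^\dagger$ has product-form spectrum $\{e^{i(\alpha_b+\beta_c)}\}$, hence satisfies $\lambda_{00}\lambda_{11}=\lambda_{01}\lambda_{10}$; the CSD factor $C\oplus D$ of a Haar-generic $U$ violates this for every ordering of the eigenvalues of $CD^\dagger$. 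So packing the demultiplexed CSD factors into the two end blocks is impossible for almost every $U$, and realizing the leftover $Z\!\otimes\! Z\!\otimes\! Z$ pieces honestly costs extra entangling gates: your route yields about $15$ two-qubit gates, not $11$.

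Nor can the leftover three-body diagonals be ``commuted into'' your middle section as a rescue: conjugating a multiplexed $R_y(q_0;q_1)$ by $e^{ic_3 Z\otimes Z\otimes Z}$ produces a genuinely three-qubit gate, so the greedy fusion has nowhere to put them. This rerouting of the non-product spectral data into a middle section that is strictly richer than a split multiplexed $R_y$ is precisely the content of Theorem~12 in \cite{chen2023} --- consistent with this paper's proof of Theorem~\ref{thm1}, which records that in the original proof $V_5$ and $V_6$ have the special form (two-qubit diagonal)$\,\cdot\,$(multiplexed $R_y$), something your construction would never produce. Finally, your closing dimension count ($6\times 15=90>63$) is only a necessary condition; parameter counting never establishes surjectivity and cannot substitute for the missing argument.
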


\begin{lemma}\label{lemma6}
Any arbitrary 2-qubit gate can be synthesized using 2 SQiSW gates and 1 diagonal 2-qubit gate. The synthesis scheme is shown below, where the SQiSW gate is denoted as S.
    \begin{center}
    \scalebox{1.1}{
    \begin{tikzpicture}
    \begin{yquantgroup}
    \registers{
    qubit {} q[2];
    }
    \circuit{
    hspace {2mm} q[1];
    [this control/.append style={shape=yquant-rectangle,draw, /yquant/default fill}]
    box {} (q[0],q[1]);
    hspace {2mm} q[1];
    }
    \equals
    
    \circuit{
    hspace {2mm} q[0-1];
    box {} (q[0]);
    box {} (q[1]);
    hspace {2mm} q[0-1];
    box {S} (q[0-1]);
    hspace {2mm} q[0-1];
    box {} (q[0]);
    box {} (q[1]);
    hspace {2mm} q[0-1];
    box {S} (q[0-1]);
    hspace {2mm} q[0-1];
    box {} (q[0]);
    box {} (q[1]);
    hspace {2mm} q[0-1];
    box {$\Delta$} (q[0,1]);
    hspace {2mm} q[0-1];
    }

    \end{yquantgroup}
    \end{tikzpicture}
    }
\end{center}
\end{lemma}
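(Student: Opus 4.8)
The plan is to reduce the statement, via Lemma~\ref{lemma3}, to a single assertion about interaction coefficients: if I can exhibit a diagonal $2$-qubit gate $\Delta$ with $k(U\Delta^{-1})\in W'$, then $U\Delta^{-1}$ is synthesizable by two SQiSW gates and single-qubit gates, and appending $\Delta$ reproduces exactly the drawn scheme $U=\bigl(\cdots S\cdots S\cdots\bigr)\Delta$. So the whole problem is to locate the right $\Delta$. First I would note that a diagonal gate decomposes (up to global phase) into its $Z\otimes I$, $I\otimes Z$ and $Z\otimes Z$ generators; the first two are single-qubit and fold into the adjacent single-qubit layer without altering $k(U\Delta^{-1})$, so without loss of generality $\Delta=\exp\{i\gamma\, Z\otimes Z\}$ is a one-parameter family.

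Next I would shrink the target: since $(x,y,0)$ obeys $x\ge y=y+|z|$ and hence lies in $W'$ whenever $x\ge y$, it suffices to find $\gamma$ for which the third interaction coefficient of $U\Delta^{-1}$ vanishes. The detector is the magic basis $M$: writing $m=M^\dagger(U\Delta^{-1})M$ and $G=m^{T}m$, the local-equivalence class is fixed by the spectrum of $G$, and $z=0$ is equivalent to that spectrum being closed under complex conjugation. For a gate in $SU(4)$ this closure is governed by one scalar: since $\operatorname{tr}(G^{-1})=\overline{\operatorname{tr}G}$ coincides with the third elementary symmetric function of the eigenvalues while the second is automatically real, reality of the characteristic polynomial collapses to $\operatorname{Im}(\operatorname{tr}G)=0$.

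The computation of $\operatorname{tr}G$ as a function of $\gamma$ is then the crux, and it is pleasantly rigid. In the magic basis $Z\otimes Z$ is diagonal with a balanced two-plus/two-minus sign pattern, so $\Delta^{-1}$ becomes a diagonal $D(\gamma)$ whose square carries entries $e^{\pm 2i\gamma}$; feeding in the KAK form of $U$ gives $\operatorname{tr}G(\gamma)=\operatorname{tr}\bigl(D(\gamma)^2 G_U\bigr)=e^{-2i\gamma}s_1+e^{2i\gamma}s_2$ for fixed complex constants $s_1,s_2$ read off from $U$. Consequently $\operatorname{Im}(\operatorname{tr}G(\gamma))=P\cos 2\gamma+Q\sin 2\gamma$ is a pure sinusoid, which always admits a real zero $\gamma_0$. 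At $\gamma_0$ the gate $U\Delta^{-1}$ has vanishing $z$-coordinate, hence lies in $W'$, and Lemma~\ref{lemma3} closes the argument.

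I expect the main obstacle to be the exact equivalence used in the second step: proving that a conjugation-closed $G$-spectrum genuinely forces $z=0$, rather than $\operatorname{Im}(\operatorname{tr}G)=0$ being merely necessary, together with the boundary configurations where two interaction coefficients coincide or sit at $\pi/4$. I would settle these by a direct case analysis on the four eigenvalue phases $\{x-y+z,\;x+y-z,\;-x-y-z,\;-x+y+z\}$ inside the fundamental Weyl chamber, checking that invariance of this multiset under negation (mod $\pi$) leaves $z=0$ as the only possibility once $x\ge y\ge|z|$ is imposed. The remaining routine points are that the magic-basis image of $Z\otimes Z$ has the claimed balanced sign pattern and that discarding the single-qubit part of $\Delta$ is harmless.
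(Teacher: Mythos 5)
Your proposal is correct, and it takes a genuinely different route from the paper's proof. The paper reduces everything to prior work: it invokes the decomposition of \cite{Shende2006}, which writes any $2$-qubit gate as single-qubit gates, two CNOTs enclosing an $R_y\otimes R_y$ layer, and one trailing diagonal gate, and then notes (citing the interaction-coefficient calculus of \cite{byron}) that the CNOT--$R_y$--CNOT core has at least two vanishing interaction coefficients, hence lies on an axis of the Weyl chamber, inside $W^{\prime}$, so Lemma~\ref{lemma3} finishes. You instead construct the diagonal gate from scratch: restrict to the one-parameter family $\Delta=\exp\{i\gamma\,Z\otimes Z\}$, characterize $z=0$ through realness of $\operatorname{tr}G$, and use the fact that $\operatorname{Im}\operatorname{tr}G(\gamma)$ is a pure sinusoid in $2\gamma$ to produce a zero $\gamma_0$. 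The crux you flag---that a conjugation-closed $G$-spectrum forces $z=0$ inside the chamber---is indeed the one step needing care, and your sketched case analysis does close it: every pairing of the four phases under negation forces one of $4z$, $4x$, $4y$ to vanish modulo $2\pi$, and the chamber inequalities $\pi/4\ge x\ge y\ge|z|$ rule out the residual $|z|=\pi/4$ configurations (e.g.\ SWAP, whose phase multiset is visibly not negation-closed); moreover, your choice to work modulo $\pi$ silently absorbs the overall-sign ambiguity of the $G$-spectrum under the $\pi/2$-shift symmetry, so the rounding of $U\Delta^{-1}$ into the chamber is harmless. The trade-off: the paper's argument is two lines but leans on two external CNOT-circuit facts, whereas yours is self-contained within the Lie-theoretic toolkit of Section~\ref{sec2}, is constructive via an intermediate-value argument, and proves the stronger, reusable statement that a pure $Z\otimes Z$ phase correction alone suffices to move any $2$-qubit gate onto the $z=0$ plane of the Weyl chamber.
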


\begin{proof}
    % (Proof of lemma \ref{lemma5}) 
    Firstly, the arbitrary 2-qubit gates can be synthesized by 2 CNOT gates and 1 diagonal 2-qubit gate, by \cite{Shende2006}. So we have
    \begin{center}
    \scalebox{1.1}{
    \begin{tikzpicture}
    \begin{yquantgroup}
    \registers{
    qubit {} q[2];
    }
    \circuit{
    hspace {2mm} q[1];
    [this control/.append style={shape=yquant-rectangle,draw, /yquant/default fill}]
    box {} (q[0],q[1]);
    hspace {2mm} q[1];
    }
    \equals
    
    \circuit{
    hspace {2mm} q[0-1];
    box {} (q[0]);
    box {} (q[1]);
    hspace {2mm} q[0-1];
    cnot q[1] | q[0];
    hspace {2mm} q[0-1];
    box {$R_y$} (q[0]);
    box {$R_y$} (q[1]);
    hspace {2mm} q[0-1];
    cnot q[1] | q[0];
    hspace {2mm} q[0-1];
    box {} (q[0]);
    box {} (q[1]);
    hspace {2mm} q[0-1];
    box {$\Delta$} (q[0,1]);
    hspace {2mm} q[0-1];
    }

    \end{yquantgroup}
    \end{tikzpicture}
    }
\end{center}

Inside such scheme, at least 2 elements of interaction coefficients of the sub-circuit
    \begin{center}
    \scalebox{1.2}{
    \begin{tikzpicture}
    \begin{yquantgroup}
    \registers{
    qubit {} q[2];
    }
    \circuit{

    hspace {2mm} q[0-1];
    cnot q[1] | q[0];
    hspace {2mm} q[0-1];
    box {$R_y$} (q[0]);
    box {$R_y$} (q[1]);
    hspace {2mm} q[0-1];
    cnot q[1] | q[0];
    hspace {2mm} q[0-1];
    }
    \end{yquantgroup}
    \end{tikzpicture}
    }
\end{center}
are zeros (by the general calculation process for interaction coefficients in section 3.2 of \cite{byron}). So the sub-circuit above can be synthesized by 2 SQiSW gates.
\end{proof}

Now we have all the fragment contributing to Theorem \ref{thm1}. We now give the complete  Theorem \ref{thm1}.

\thmone*
\begin{proof}
     First, we aim to prove that all diagonal gates in the scheme of Lemma \ref{lemma5} can be synthesized by at most 2 SQiSW gates. 
    
    Any diagonal matrix is locally equivalent to a $R_{zz}$ gate, which is defined as
    \begin{equation*}
        R_{zz}(\theta) = R_z(\theta) \otimes R_z(-\theta),
    \end{equation*}
    so we have
    \begin{equation*}
        R_{zz}(\theta) = 
        \begin{pmatrix}
            e^{-i \frac{\theta}{2}}&&&\\
            &e^{i \frac{\theta}{2}}&&\\
            &&e^{i \frac{\theta}{2}}&\\
            &&&e^{-i \frac{\theta}{2}}
        \end{pmatrix}.
    \end{equation*}
    
    It's easy to see
    \begin{equation*}
        R_{zz}(\theta) = exp(-i \frac{\theta}{2} Z\otimes Z),
    \end{equation*}
    which tells us that the interaction coefficients of $R_{zz}$ gate is $(0,0,-\frac{\theta}{2})$. By Lemma \ref{lemma3}, $R_{zz}$ can be synthesized by 2 SQiSW gates. Thus all arbitrary diagonal gates can be synthesized by 2 SQiSW gates. 
    
    Next, we prove $V_5$ and $V_6$ can be synthesized by at most 2 SQiSW gates. In the original proof of Lemma \ref{lemma5} (see Theorem 12 in \cite{chen2023}), we get to know that $V_5$ and $V_6$ have the form like:
    \begin{center}
    \scalebox{1.2}{
    \begin{tikzpicture}
        \begin{yquant}
            qubit {} q[2];
            hspace {2mm} q[0-1];
            box {$\Delta$} (q[0-1]);
            hspace {2mm} q[0-1];
            [this control/.append style={shape=yquant-rectangle,draw, /yquant/default fill}]
            box {$R_y$} q[0] | q[1];
            hspace {2mm} q[0-1];
        \end{yquant}
    \end{tikzpicture}
    }
    \end{center}
    
    Rearrange qubits and reverse the circuit, we only need to prove the circuit
    \begin{center}
    \scalebox{1.2}{
    \begin{tikzpicture}
        \begin{yquant}
            qubit {} q[2];
            hspace {2mm} q[0-1];
            [this control/.append style={shape=yquant-rectangle,draw, /yquant/default fill}]
            box {$R_y$} q[1] | q[0];
            hspace {2mm} q[0-1];
            box {$\Delta$} (q[0-1]); 
            hspace {2mm} q[0-1];
        \end{yquant}
    \end{tikzpicture}
    }
    \end{center}
    can be synthesized by at most 2 SQiSW gates. That's due to the second and third points of Property \ref{prop2}.
    
    Since an arbitrary diagonal gate is locally equivalent to $R_{zz}$ gate up to single qubit gates on the leftmost side or rightmost side, we only need to prove the interaction coefficients of unitary
    \begin{equation*}
        \begin{pmatrix}
            R_z(\theta)&\\
            &R_z(-\theta)
        \end{pmatrix}
        \begin{pmatrix}
            R_y(\theta_1)&\\
            &R_y(\theta_2)
        \end{pmatrix}
        \label{eq:3-25}
    \end{equation*}
    locates in $W^\prime$, which is spanned by 2 SQiSW gates.

    To simplify the calculation, we consider the unitary locally equivalent to it. By applying $R_z$ and $R_y$ gates to two sides of it, we have
    \begin{equation*}
        \begin{split}
        & 
        \begin{pmatrix}
            R_z(\theta)&\\
            &R_z(-\theta)
        \end{pmatrix}
        \begin{pmatrix}
            R_y(\theta_1)&\\
            &R_y(\theta_2)
        \end{pmatrix}\\
        &\sim 
        \begin{pmatrix}
            I&\\
            &R_z(-2\theta)
        \end{pmatrix}
        \begin{pmatrix}
            R_y(\theta_1)&\\
            &R_y(\theta_2)
        \end{pmatrix}\\
        &\sim
        \begin{pmatrix}
            I&\\
            &R_z(-2\theta)
        \end{pmatrix}
        \begin{pmatrix}
            I&\\
            &R_y(\theta_2-\theta_1)
        \end{pmatrix}\\
        &=
        \begin{pmatrix}
            I&\\
            &R_z(-2\theta)R_y(\theta_2-\theta_1)
        \end{pmatrix},
        \end{split}
    \end{equation*}

    where $\sim$ means being locally equivalent to.

    For simplicity's sake, we assume
    \begin{equation*}
        U = 
        \begin{pmatrix}
            I&\\
            &R_z(\theta_1)R_y(\theta_2)
        \end{pmatrix}.
    \end{equation*}
    
    The interaction coefficients of $U$ has 2 elements being zero. So $U$ lies on the x-axis in Weyl chamber, thus $U$ can be synthesized by at most 2 SQiSW gates. So $V_5$ and $V_6$ can all be synthesized by at most 2 SQiSW gates.

    Then, we make use of Lemma \ref{lemma6} to optimize the circuit. Synthesizing $V_1$ and $V_3$ by the scheme in Lemma \ref{lemma6} allows us to move the diagonal gates from $V_1$ and $V_3$ across other diagonal gates in Lemma \ref{lemma5}. Then they can be finally absorbed into $V_4$ and $V_6$. 
    
    Also, We can synthesize the middle CNOT gate in circuit of Lemma \ref{lemma5} by 2 SQiSW gates, which is shown in Fig.~\ref{fig2}. To conclude, we can synthesize any arbitrary 3-qubit gate by at most 24 SQiSW gates.
\end{proof}

Recall that the upper bound of synthesis of arbitrary 3-qubit gates using CNOT gate is 20 \cite{Shende2006}. Our result is 24 for SQiSW gate, 40\% better than trivially replacing each CNOT gate by 2 SQiSW gates.
%%%%%%%%%%%%%%%%%%%%%%%%%%%%%%%%%%%%%%%%%%%%%%%%%%%%%%%%%%%%%%%
\subsection{Exact synthesis of arbitrary $n$-qubit gates}
% \begin{theorem}\label{thm2}
%     Any arbitrary $n$-qubit gate can be synthesized by at most $\frac{139}{192}\times 4^n - 3\times 2^n + \frac{5}{3}$ SQiSW gates.
% \end{theorem}
We already have the upper bound of synthesis of arbitrary 3-qubit gates, now we take Theorem \ref{thm1} as the base case and prove Theorem \ref{thm2}, using Quantum Shannon Decomposition recursively and some circuit optimization techniques.

\begin{lemma}[Cosine-Sine Decomposition \cite{Mottonen2004,Shende2006,tucci1998}]
\label{lemma7}
Any arbitrary $n$-qubit gate can be exactly synthesized as below.
\begin{center}
    \scalebox{1.5}{
    \begin{tikzpicture}
    \begin{yquantgroup}
    \registers{
    qubit {} q[2];
    
    }
    \circuit{
    slash q[1];
    hspace {2mm} q[1];
    box {} (q[0],q[1]);
    hspace {2mm} q[1];
    }
    \equals
    \circuit{
    % [operators/every slash/.append style={shape=yquant-slash, x radius=.3mm,y radius=.7mm,draw}]
    slash q[1];
    hspace {2mm} q[1];
    [this control/.append style={shape=yquant-rectangle,draw, /yquant/default fill}]
    box {} q[1]|q[0];
    [this control/.append style={shape=yquant-rectangle,draw, /yquant/default fill}]
    box {$R_y$} q[0]|q[1];
    [this control/.append style={shape=yquant-rectangle,draw, /yquant/default fill}]
    box {} q[1]|q[0];
    hspace {2mm} q[1];
    }
    \end{yquantgroup}
    \end{tikzpicture}
    }
    %\label{thm:3-4}
\end{center}
where $n\geq 2$.
\end{lemma}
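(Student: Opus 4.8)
The plan is to reduce the circuit identity to the classical Cosine--Sine Decomposition (CSD) of a unitary matrix and then read off each of the three factors as a quantum multiplexor. Writing $m = 2^{n-1}$ and splitting the $2^n \times 2^n$ matrix $U$ into four $m\times m$ blocks according to the value of the top qubit,
\begin{equation*}
    U = \begin{pmatrix} A & B \\ C & D \end{pmatrix},
\end{equation*}
I claim $U$ admits the factorization
\begin{equation*}
    U = \begin{pmatrix} L_0 & \\ & L_1 \end{pmatrix} \begin{pmatrix} \cos\Theta & -\sin\Theta \\ \sin\Theta & \cos\Theta \end{pmatrix} \begin{pmatrix} R_0 & \\ & R_1 \end{pmatrix},
\end{equation*}
where $L_0, L_1, R_0, R_1 \in U(m)$ and $\cos\Theta, \sin\Theta$ are the diagonal matrices $\mathrm{diag}(\cos\theta_j)$ and $\mathrm{diag}(\sin\theta_j)$ for some angles $\theta_j$.

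First I would produce this factorization. Take a singular value decomposition of the top-left block $A = L_0\,(\cos\Theta)\,R_0^\dagger$; since $U$ is unitary, each column of $\begin{pmatrix} A \\ C\end{pmatrix}$ has unit norm, so every singular value of $A$ lies in $[0,1]$ and may be written as $\cos\theta_j$. Substituting back and using column orthonormality $A^\dagger A + C^\dagger C = I$ forces $C^\dagger C = R_0(\sin^2\Theta)R_0^\dagger$, hence $C = L_1\,(\sin\Theta)\,R_0^\dagger$ for a unitary $L_1$; a symmetric argument on the right-hand blocks fixes a common $R_1$ so that $B = -L_0\,(\sin\Theta)\,R_1^\dagger$ and $D = L_1\,(\cos\Theta)\,R_1^\dagger$. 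Relabelling $R_0^\dagger, R_1^\dagger$ as the right factors gives the displayed form.

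Then I would translate each factor into the circuit, reading the diagram left to right as rightmost matrix applied first. The outer block-diagonal matrices $\mathrm{diag}(R_0, R_1)$ and $\mathrm{diag}(L_0, L_1)$ apply $R_0$ (resp.\ $L_0$) when the top qubit is $|0\rangle$ and $R_1$ (resp.\ $L_1$) when it is $|1\rangle$; these are exactly quantum multiplexors on the lower $n-1$ qubits with the top qubit as the single control, matching the two outer boxes. For the central factor, index the basis by the pair (top qubit, lower-register state $j$); for each fixed $j$ the action on the top qubit is $\begin{pmatrix}\cos\theta_j & -\sin\theta_j\\ \sin\theta_j & \cos\theta_j\end{pmatrix} = R_y(2\theta_j)$, so the middle factor is a multiplexor-$R_y$ on the top qubit controlled by the lower $n-1$ qubits, matching the central box.

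The main obstacle is the derivation of the cosine--sine block structure itself: after fixing the SVD of $A$, one must verify that the \emph{same} right unitary $R_0$ simultaneously puts the bottom-left block in the form $L_1(\sin\Theta)R_0^\dagger$, and that a single common pair $L_1, R_1$ renders both right-hand blocks in compatible cosine--sine form. This compatibility is forced by the unitarity relations $A^\dagger A + C^\dagger C = I$ and $AC^\dagger + BD^\dagger = 0$ together with their row/column analogues, but checking that the resulting diagonal matrices are genuinely $\cos\Theta$ and $\sin\Theta$ (not merely some pair of diagonal matrices) is the step requiring care. Everything after the factorization is a routine reinterpretation of block-diagonal and block-rotation matrices as multiplexors.
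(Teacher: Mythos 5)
The paper itself does not prove Lemma~\ref{lemma7}: it is imported as a known result from the cited references, so there is no in-paper argument to compare against. Your proposal is a correct, self-contained reconstruction of the standard proof underlying those references: derive the matrix Cosine--Sine Decomposition from an SVD of the top-left block together with the unitarity relations, then observe that $\mathrm{diag}(R_0,R_1)$ and $\mathrm{diag}(L_0,L_1)$ are precisely multiplexors on the lower $n-1$ qubits controlled by the top qubit, while the central block rotation acts on the top qubit as $R_y(2\theta_j)$ for each lower-register basis state $j$, i.e.\ as the multiplexor-$R_y$ in the middle box. Two points to tighten when writing this up. First, the inference ``each column of $\begin{pmatrix} A\\ C\end{pmatrix}$ has unit norm, so every singular value of $A$ lies in $[0,1]$'' is not valid as stated (unit diagonal of $A^\dagger A$ does not bound its operator norm); what you need, and in fact invoke one sentence later, is the full column-orthonormality relation $A^\dagger A + C^\dagger C = I$, which gives $A^\dagger A \le I$ and hence all singular values in $[0,1]$. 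Second, the compatibility step you flag is a real one: when some $\sin\theta_j=0$ or $\cos\theta_j=0$, or when singular values repeat, the unitaries $L_1$ and $R_1$ are only partially determined by $C$ and $B$, and one must fill in the undetermined columns/rows and verify consistency with $D$ using $A^\dagger B + C^\dagger D=0$ and $B^\dagger B + D^\dagger D=I$; this bookkeeping is standard in every proof of the CSD and does go through, so it is a matter of detail rather than a gap in the approach.
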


\begin{lemma}[Demultiplexing of multiplexor \cite{Shende2006}]
\label{lemma8}
    Any arbitrary $n$-qubit multiplexor with 1 controlled qubit can be exactly synthesized as below.
\begin{center}
    \scalebox{1.5}{
    \begin{tikzpicture}
    \begin{yquantgroup}
    \registers{
    qubit {} q[2];
    }
    \circuit{
    slash q[1];
    hspace {2mm} q[1];
    [this control/.append style={shape=yquant-rectangle,draw, /yquant/default fill}]
    box {} q[1]|q[0];
    hspace {2mm} q[1];
    }
    \equals
    \circuit{
    % [operators/every slash/.append style={shape=yquant-slash, x radius=.3mm,y radius=.7mm,draw}]
    slash q[1];
    hspace {2mm} q[1];
    [this control/.append style={shape=yquant-rectangle,draw, /yquant/default fill}]
    box {} q[1];
    [this control/.append style={shape=yquant-rectangle,draw, /yquant/default fill}]
    box {$R_z$} q[0]|q[1];
    [this control/.append style={shape=yquant-rectangle,draw, /yquant/default fill}]
    box {} q[1];
    hspace {2mm} q[1];
    }
    \end{yquantgroup}
    \end{tikzpicture}
    }
    %\label{thm:3-5}
\end{center}
where $n\geq 2$.
\end{lemma}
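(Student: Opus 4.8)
The plan is to exhibit the three factors appearing on the right-hand side explicitly and to check directly that their product reconstructs the block-diagonal multiplexor. Writing the $n$-qubit multiplexor with its single control qubit at the top as $\mathrm{diag}(U_1,U_2)$ with $U_1,U_2\in U(2^{n-1})$ acting on the remaining target qubits, the identity I aim to prove is
\begin{equation*}
\begin{pmatrix} U_1 & \\ & U_2 \end{pmatrix}
=
\begin{pmatrix} V & \\ & V \end{pmatrix}
\begin{pmatrix} D & \\ & D^{\dagger} \end{pmatrix}
\begin{pmatrix} W & \\ & W \end{pmatrix},
\end{equation*}
where $V,W\in U(2^{n-1})$ and $D$ is a diagonal unitary. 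The outer factors $\mathrm{diag}(V,V)=I\otimes V$ and $\mathrm{diag}(W,W)=I\otimes W$ are precisely the two plain boxes acting on the target register in the right-hand circuit, so the real content of the lemma is that the middle factor $\mathrm{diag}(D,D^{\dagger})$ is a multiplexor-$R_z$.

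First I would construct $V$ and $D$ from a spectral decomposition. Since $U_1U_2^{\dagger}\in U(2^{n-1})$ is unitary, it admits an eigendecomposition $U_1U_2^{\dagger}=V\Lambda V^{\dagger}$ with $V$ unitary and $\Lambda$ diagonal unitary; I then pick $D$ to be a diagonal unitary square root, $D^2=\Lambda$, which is always available by taking a unit-modulus square root of each eigenvalue. Next I would set $W=DV^{\dagger}U_2$, which is unitary as a product of unitaries, and verify the factorization by direct computation: $VD^{\dagger}W=VD^{\dagger}DV^{\dagger}U_2=U_2$ and $VDW=VD^2V^{\dagger}U_2=(U_1U_2^{\dagger})U_2=U_1$, using $D^{\dagger}D=I$ and $D^2=\Lambda$. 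This establishes the displayed equation.

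The remaining and conceptually central step is to recognize $\mathrm{diag}(D,D^{\dagger})$ as the multiplexor-$R_z$ drawn in the figure. Writing $D=\mathrm{diag}(e^{i\phi_0},\dots,e^{i\phi_{2^{n-1}-1}})$, for each computational basis state $|j\rangle$ of the target register the top qubit experiences the $2\times 2$ phase $\mathrm{diag}(e^{i\phi_j},e^{-i\phi_j})$, which up to the usual conventions is $R_z(-2\phi_j)$. Thus the angle of an $R_z$ rotation on the former control qubit is selected by the state of the former target register, which is exactly the multiplexor-$R_z$ with the roles of control and target interchanged. I expect this identification --- matching the $\pm\phi_j$ phase pattern to the $R_z$ convention while tracking the control/target role reversal --- to be the main subtlety, whereas the only existence question, namely choosing the diagonal square root $D$, is unobstructed because every eigenvalue of $U_1U_2^{\dagger}$ lies on the unit circle.
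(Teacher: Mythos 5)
Your proposal is correct and takes essentially the same route as the source: the paper states Lemma~\ref{lemma8} without proof, citing \cite{Shende2006}, and the argument there is exactly your construction --- diagonalize $U_1U_2^{\dagger}=VD^2V^{\dagger}$, set $W=DV^{\dagger}U_2$, and read off $\mathrm{diag}(U_1,U_2)=(I\otimes V)\,\mathrm{diag}(D,D^{\dagger})\,(I\otimes W)$. Your identification of the middle factor $\mathrm{diag}(D,D^{\dagger})$ as a multiplexed $R_z$ acting on the former control qubit, with angles selected by the former target register, is also the correct (and only subtle) reading of the circuit on the right-hand side.
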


We can get a synthesis scheme from the combination of Lemma \ref{lemma7} and \ref{lemma8}, which is called Quantum Shannon Decomposition.

\begin{lemma}[Quantum Shannon Decomposition \cite{Shende2006}]
\label{lemma9}
Any arbitrary $n$-qubit gate can be exactly synthesized as below.
\begin{center}
    \scalebox{1.15}{
    \begin{tikzpicture}
    \begin{yquantgroup}
    \registers{
    qubit {} q[2];
    }
    \circuit{

    slash q[1];
    hspace {2mm} q[1];
    [this control/.append style={shape=yquant-rectangle,draw, /yquant/default fill}]
    box {} (q[1],q[0]);
    hspace {2mm} q[1];
    }
    \equals
    \circuit{

    slash q[1];
    hspace {2mm} q[1];
    [this control/.append style={shape=yquant-rectangle,draw, /yquant/default fill}]
    box {} q[1];
    [this control/.append style={shape=yquant-rectangle,draw, /yquant/default fill}]
    box {$R_z$} q[0]|q[1];
    [this control/.append style={shape=yquant-rectangle,draw, /yquant/default fill}]
    box {} q[1];
    [this control/.append style={shape=yquant-rectangle,draw, /yquant/default fill}]
    box {$R_y$} q[0]|q[1];
    [this control/.append style={shape=yquant-rectangle,draw, /yquant/default fill}]
    box {} q[1];
    [this control/.append style={shape=yquant-rectangle,draw, /yquant/default fill}]
    box {$R_z$} q[0]|q[1];
    [this control/.append style={shape=yquant-rectangle,draw, /yquant/default fill}]
    box {} q[1];
    hspace {2mm} q[1];
    }
    \end{yquantgroup}
    \end{tikzpicture}
    }
\end{center}
where $n\geq 2$.
\end{lemma}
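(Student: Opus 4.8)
The plan is to obtain Lemma~\ref{lemma9} as a direct composition of the Cosine-Sine Decomposition (Lemma~\ref{lemma7}) and the demultiplexing identity (Lemma~\ref{lemma8}); no new analytic work is needed beyond careful bookkeeping of the circuit order, since both ingredient lemmas are already established.

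First I would apply Lemma~\ref{lemma7} to the arbitrary $n$-qubit gate, writing it in left-to-right circuit order as $M_1$, followed by a multiplexor-$R_y$ acting on the top qubit and controlled by the remaining $n-1$ qubits, followed by $M_2$. Here $M_1$ and $M_2$ are each multiplexors with a single control qubit (the top qubit q$[0]$) acting on the bottom $n-1$ qubits q$[1]$. Next I would apply Lemma~\ref{lemma8} separately to $M_1$ and to $M_2$. Each single-control multiplexor expands, in circuit order, into an arbitrary $(n-1)$-qubit gate, then a multiplexor-$R_z$ on the top qubit controlled by the bottom qubits, then a second arbitrary $(n-1)$-qubit gate.

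Substituting these two expansions back into the CS decomposition yields the left-to-right sequence: an $(n-1)$-qubit gate, an $R_z$-multiplexor, an $(n-1)$-qubit gate, the $R_y$-multiplexor, an $(n-1)$-qubit gate, an $R_z$-multiplexor, and a final $(n-1)$-qubit gate. This is exactly the seven-block circuit drawn in the statement, comprising four arbitrary $(n-1)$-qubit gates, two multiplexor-$R_z$, and one multiplexor-$R_y$, which completes the argument.

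The only point requiring genuine care — and the place where an error is easiest to introduce — is the ordering of the factors when the expansions are spliced together. One must verify that the inner $(n-1)$-qubit gate appearing at the right end of the $M_1$ expansion and the one appearing at the left end of the $M_2$ expansion are kept \emph{separated} by the $R_y$-multiplexor inherited from Lemma~\ref{lemma7}; because the multiplexor-$R_y$ acts nontrivially on the top qubit and does not commute past them, these two blocks do not merge. Counting the blocks then confirms the claimed pattern. I expect no deeper obstacle, as the proof is purely a combinatorial placement of the blocks provided by the two preceding lemmas.
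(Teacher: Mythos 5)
Your proposal is correct and matches the paper's own route exactly: the paper obtains Lemma~\ref{lemma9} precisely by substituting the demultiplexing identity of Lemma~\ref{lemma8} into the two single-control multiplexors produced by the Cosine-Sine Decomposition of Lemma~\ref{lemma7}, yielding the seven-block circuit with four $(n-1)$-qubit gates, two multiplexor-$R_z$, and one multiplexor-$R_y$. Your additional care about the ordering of the spliced factors is sound and introduces no discrepancy with the paper.
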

% \begin{proof}
%     It can be easily derived by lemma \ref{lemma7} and lemma \ref{lemma8}
% \end{proof}
Quantum Shannon Decomposition gives a recursive decomposition scheme for arbitrary $n$-qubit gates, thus we have enough evidence to prove Theorem \ref{thm2}.

Denote $c_l$ as the upper bound of number of SQiSW gates to synthesize any arbitrary $l$-qubit gate. Now we state the proof of Theorem \ref{thm2} below.
\thmtwo*
\begin{proof}
    According to Lemma \ref{lemma9} and Lemma \ref{lemma1}, we have
    \begin{equation*}
        c_j \leq 4c_{j-1} + 3\times 2^j,~j\geq 2,
    \end{equation*}
    by replacing each CNOT gate by 2 SQiSW gates.

    Then 
    \begin{equation*}
        c_n \leq 4^{n-l}(c_l+3\times 2^l) - 3\times 2^n.
    \end{equation*}

    Now we introduce two circuit optimization techniques to decrease the upper bound.

    First, notice that CZ gate is locally equivalent to CNOT gate (see Fig.~\ref{fig:cnot}), and Lemma \ref{lemma1} still holds after replacing all CNOT gates by CZ gates.

    So the multiplexor-$R_y$ gates in Lemma \ref{lemma7} can be synthesized by CZ gates with one CZ gate located on the leftmost side or rightmost side. Since CZ gate is diagonal, it can be absorbed into the adjacent multiplexor, saving $(4^{n-l}-1)/3$ CNOT (or CZ) gates in total, thereby saving $2(4^{n-l}-1)/3$ SQiSW gates.

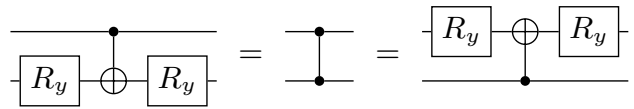
\begin{figure}[htbp]
    \begin{center}
    \scalebox{1.2}{
    \begin{tikzpicture}
    \begin{yquantgroup}
    \registers{
    qubit {} q[2];
    }
    \circuit{
    box {$R_y$} q[1];
    cnot q[1] | q[0];
    box {$R_y$} q[1];
    }
    \equals
    
    \circuit{
    zz (q[0],q[1]);
    }
    \equals
    
    \circuit{
    box {$R_y$} q[0];
    cnot q[0] | q[1];
    box {$R_y$} q[0];
    }
    \end{yquantgroup}
    \end{tikzpicture}
    }
    \end{center}
    \caption{CNOT gate is locally equivalent to CZ gate, and a CZ gate is equal to a CNOT gate with $R_y$ gates on both sides of its target qubit.}
    \label{fig:cnot}
\end{figure}
    
    Second, stop recursion at $l=3$ and consider Theorem \ref{thm1}. We can synthesize $V_2$ or $V_4$ by Lemma \ref{lemma6} and get to have a synthesis scheme for 3-qubit gates with a diagonal gate on the leftmost or rightmost side. All these diagonal gates are commutative with multiplexor gates through controlled qubits in Lemma \ref{lemma9}. Then they can be absorbed into the neighboring 3-qubit gate. We get to save $4^{n-3}-1$ SQiSW gates in total.

    Since we already have $c_3 = 24$, such two optimization techniques above help achieve an upper bound that
    \begin{equation*}
    \begin{split}
        c_n &\leq 4^{n-3}(c_3+3\times 2^3) - 3\times 2^n - 5(4^{n-3}-1)/3\\
        &= \frac{139}{192}\times 4^n - 3\times 2^n + \frac{5}{3}.
    \end{split}
    \end{equation*}

    Here, we have finished the proof of Theorem \ref{thm2}.
\end{proof}

Recall that the upper bound of synthesis of arbitrary $n$-qubit gates using CNOT gate is $\frac{23}{48}4^n(1+o(1))$ \cite{Shende2006}. Our result is $\frac{139}{192}4^n(1+o(1))$ using SQiSW gate, 24\% better than trivially replacing each CNOT gate by 2 SQiSW gates when $n$ is large.
%%%%%%%%%%%%%%%%%%%%%%%%%%%%%%%%%%%%%%%%
%%%%%%%%%%%%%%%%%%%%%%%%%%%%%%%%%%%%%%%%
\section{Numerical optimization and its analysis}\label{sec4}
Different from exact synthesis, numerical optimization aims to find a circuit numerically approximating the target circuit in synthesis problems. In this section, we run numerical optimization to synthesize arbitrary 3-qubit gates and Toffoli gate, studying how many SQiSW gates are sufficient to reach low enough error in the two tasks. Additionally, we give a reliable pruning algorithm and analyze its performance which is shown as Theorem \ref{thm3}.

In this section we denote $n$ as the number of qubits, $N$ as the number of SQiSW gates in the circuit. For the target function of optimization, we employ the known standard: the red{``distance''} or error between two circuits $U$ and $V$ is defined as
\begin{equation*}
    E(U,V) = 1-\frac{|tr(U^\dagger V)|}{2^n}.
\end{equation*}

The stopping threshold of error in the optimization algorithm is set to $10^{-6}$, which is nothing compared to the experimental error of realizing a quantum circuit.

To better describe the algorithm, we first introduce the concept of circuit structure.
\begin{definition}[Circuit structure \cite{Ashhab2022}]
    The circuit structure of a quantum circuit is a sequence of positions to show the configuration of multi-qubit gates in the circuit.
\end{definition}
In this paper, we use $0\sim n-1$ to show the position in a circuit with $n$ qubits and in chronological order. For example, structure of the circuit in Fig.~\ref{fig:struc} is $(0,1), (1,2), (0,2)$.

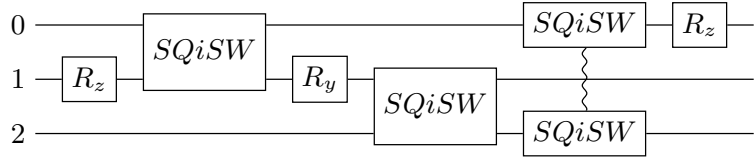
\begin{figure*}[!htbp]
    \begin{center}
    \scalebox{1.2}{
    \begin{tikzpicture}
        \begin{yquant}
            qubit {$\idx$} q[3];

            hspace {2mm} q[0-2];
            box {$R_z$} (q[1]);
            hspace {2mm} q[0-2];
            box {$SQiSW$} (q[0-1]);
            hspace {2mm} q[0-2];
            box {$R_y$} (q[1]);
            hspace {2mm} q[0-2];
            box {$SQiSW$} (q[1-2]);
            hspace {2mm} q[0-2];
            box {$SQiSW$} (q[0],q[2]);
            hspace {2mm} q[0-2];
            box {$R_z$} (q[0]);
            hspace {2mm} q[0-2];
        \end{yquant}
    \end{tikzpicture}
    }
    \end{center}

    \caption{A quantum circuit with structure $(0,1), (1,2), (0,2)$.}
    \label{fig:struc}

\end{figure*}

%%%%%%%%%%%%%%%%%%%%%%%%%%%%%%%%%%%%%%%%%%%%%%%%%%%%%%%%%%%%%%
\subsection{Numerical optimization algorithm}
We should look for the minimum number of SQiSW gates used in the circuit to numerically approximate target circuits. But we may have lots of structures to consider, and there may be a huge gap between performances of different structures. 

We adopt the frame of two searching spaces, which are mentioned in \cite{Ashhab2022}. More specifically, given the number of SQiSW gates, for each circuit structure (circuit structure space), we optimize the parameters in the circuit (parameter space) to approximate target circuit. When programming, they are actually two nested ``for'' loops. And we utilize qfactor package in python \cite{qfactor} to help the parameter optimization, which is a parameter learning tool. In the rest part of this paper, we take it as a black box. The algorithms for Toffoli gate and arbitrary 3-qubit gates are shown in algorithm \ref{alg:numerical toffoli} and algorithm \ref{alg:numerical 3qubit}. Note that SQiSW gate is a deterministic gate, we only need to optimize parameters of single-qubit gates. In algorithm \ref{alg:numerical 3qubit}, we sample unitary matrices uniformly under Haar measure \cite{Haar} and study the average error.

\begin{algorithm}[htbp]
\caption{Numerical optimization for Toffoli gate.} 
\label{alg:numerical toffoli}
\AlgoDontDisplayBlockMarkers\SetAlgoNoEnd\SetAlgoNoLine
\SetNlSty{textbf}{}{\enspace}
\SetKwFor{For}{for}{\string:}{}
\SetKwIF{If}{ElseIf}{Else}{if}{:}{elif}{else:}{}% 
\KwOut{Deterministic circuit $C$ numerically approximating Toffoli gate.}
\BlankLine
\For{num of SQiSW gates $N$}{
\quad\For{circuit structure $S$}{
\quad\quad\For{$i$ in range(repetition\_time)}{
\quad\quad\quad randomize($\Vec{\theta}$)\;
\quad\quad\quad $\Vec{\theta}$ = Learn($S$, $\Vec{\theta}$, Toffoli, learning parameters)\;
\quad\quad\quad $E$ = dist($S(\Vec{\theta})$, Toffoli)\;
\quad\quad\quad \If{$E\leq\epsilon$}{
\quad\quad\quad\quad\textbf{Return} $S(\Vec{\theta})$\;
                }
        }
    }
}  
\end{algorithm}

\begin{algorithm}[htbp]
\caption{Numerical optimization for arbitrary 3-qubit gates.} 
\label{alg:numerical 3qubit}
\AlgoDontDisplayBlockMarkers\SetAlgoNoEnd\SetAlgoNoLine
\SetNlSty{textbf}{}{\enspace}
\SetKwFor{For}{for}{\string:}{}
\SetKwIF{If}{ElseIf}{Else}{if}{:}{elif}{else:}{}% 
\KwOut{Circuit structure $S$ numerically approximating arbitrary 3-qubit gates under acceptable average error.}
\BlankLine
\For{num of SQiSW gates $N$}{
\quad\For{circuit structure $S$}{
\quad\quad lst\_U = sample($SU(8)$,100)\;
\quad\quad error = [~]\;
\quad\quad\For{$U$ in lst\_U}{
\quad\quad\quad min\_d = +$\infty$\;

\quad\quad\quad\For{$i$ in range(repetition\_time)}{
\quad\quad\quad\quad randomize($\Vec{\theta}$)\;
\quad\quad\quad\quad $\Vec{\theta}$ = Learn($S$, $\Vec{\theta}$, Toffoli, learning parameters)\;
\quad\quad\quad\quad\If{dist($S(\Vec{\theta}), U$) \textless min\_d}{
\quad\quad\quad\quad\quad min\_d = dist($S(\Vec{\theta}), U$)\;
                    }
                }
\quad\quad\quad error.append(min\_d)\;
                }
\quad\quad average = avg(error)\;
\quad\quad \If{average $\leq\epsilon$}{
\quad\quad\quad\textbf{Return} $S$\;
        }
    }
}  
\end{algorithm}

%%%%%%%%%%%%%%%%%%%%%%%%%%%%%%%%%%%%%%%%%%%%%%%%%%%%%%%%%%%%%%
\subsection{Pruning techniques for speed-up}
In a $n$-qubit system, each SQiSW gate has $n(n-1)/2$ possible positions, which are $(0,1)$, $(0,2)$ and $(1,2)$ when $n=3$. Recall SQiSW gate is qubit symmetrical, the qubit order of SQiSW gate doesn't matter.

That means we would have a structure space of $3^N$ size, leading to huge difficulties when $N$ is large, which is intuitively shown in table \ref{tab:space}.

The idea is to find these circuit structures that are ``equivalent'' under numerical optimization, thus we only need to keep one element of each class with little influence on the performance of our numerical optimization algorithms.

\begin{table}[h]
    \centering % 居中对齐
    \caption{The size of circuit structure space under different number of qubits $n$ and number of SQiSW gates $N$}
    \label{tab:space}
    \renewcommand{\arraystretch}{1} %设置行高比例
    \begin{tabular}{|c|c|c|c|c|} 
        \hline
        \diagbox{n}{N} & $5$ & $10$ & $15$ & $50$\\
        \hline
        $2$ & $1$ & $1$  & $1$  & $1$ \\
        \hline
        $3$ & $243$ & $59049$ & $14348907$ & $> 10^{23}$  \\
        \hline
        $4$ & $7776$ & $60466176$ & $\sim 10^{11}$ &$> 10^{36}$\\
        \hline
        $5$ & $10^5$ & $10^{10}$ & $10^{15}$ & $10^{50}$  \\
        \hline
    \end{tabular}
\end{table}

We adopt two pruning techniques which are usually used in numerical optimization: qubit rearrangement and circuit reversion \cite{Ashhab2022,synthetiq}, and give an explicit theoretical analysis of pruning efficiency when both of them are used in our synthesis tasks. To make our pruning algorithm theoretically reliable, we are supposed to make sure that such techniques make little influence on the numerical results. Here we define numerical equivalence and equivalence closure to help our proof of Theorem \ref{thm3}.

\begin{definition}[Numerical equivalence]
    Given circuit structures $C_1$ and $C_2$, if 
    \begin{equation*}
        \forall T \in \mathcal{T},~\exists~\Vec{\theta}~s.t.~(C_1, \Vec{\theta})\xrightarrow{} T
    \end{equation*}
    is equivalent to
    \begin{equation*}
        \forall T \in \mathcal{T},~\exists~\Vec{\theta^\prime}~s.t.~(C_2, \Vec{\theta^\prime})\xrightarrow{} T, 
    \end{equation*}
    where $\mathcal{T}$ is our target gate set, 
    then we say $C_1$ and $C_2$ are numerically equivalent (or $C_1$ and $C_2$ are in the same equivalent class under numerical optimization). 
\end{definition}
Our target gate set in algorithm \ref{alg:numerical toffoli} and algorithm \ref{alg:numerical 3qubit} are $\{\mathrm{Toffoli}\}$ and $SU(8)$ respectively. 
To better depict the equivalent class under more than one equivalence relation, we define equivalent closure.

\begin{definition}[Equivalent closure]
    An equivalent closure $\mathcal{C}$ under operation set $R$ is a maximal set s.t. $\forall~e_1, e_2 \in \mathcal{C}$, $e_1$ can be generated by applying operations in $R$ on $e_2$ finite times. Here $R$ must guarantee that $e_1$ and $e_2$ are numerically equivalent. 
\end{definition}

It can be verified that in algorithm \ref{alg:numerical toffoli} and algorithm \ref{alg:numerical 3qubit}, the circuit structures before and after being transformed by $R$ are numerically equivalent. Here $R$ contains ``qubit rearrangement'' and ``circuit reversion'', in which qubit rearrangement means rearranging the qubits order of the circuit structure and circuit reversion means reversing the gate order of the circuit. 

So the whole circuit structure space in both algorithm \ref{alg:numerical toffoli} and algorithm \ref{alg:numerical 3qubit} can be divided into different equivalent closures under $R$. Now we state algorithm \ref{alg3}, which successfully keeps exact one structure remained but kicks other structures out in each equivalent closure under $R$ when it is applied to algorithm \ref{alg:numerical toffoli} and algorithm \ref{alg:numerical 3qubit}.

\begin{algorithm}[htbp]
\caption{Pruning of circuit structure space \cite{Ashhab2022,synthetiq}} 
\label{alg3}
\AlgoDontDisplayBlockMarkers\SetAlgoNoEnd\SetAlgoNoLine
\SetNlSty{textbf}{}{\enspace}
\SetKwFor{For}{for}{\string:}{}
\SetKwIF{If}{ElseIf}{Else}{if}{:}{elif}{else:}{}% 
\KwIn{Set of all possible circuit structures $S$.}
\KwOut{Set $S$ with circuit structures remained after qubit rearrangement and circuit reversion.}
\BlankLine
\For{circuit structure $C\in S$}{
\quad $S_1$ $\leftarrow$ circuit structures equivalent to $C$ under qubit rearrangement\;

\quad $S_2$ $\leftarrow$ circuit structures equivalent to $C$ under circuit reversion\;

\quad Remove circuits different from $C$ in $S_1 \cup S_2$ from $S$\;
}  
\end{algorithm}

 % Thus, algorithm \ref{alg3} makes our pruning algorithm reach the theoretical optimal performance under $R$. 

% The whole circuit structure space can be divided into different equivalent closures under operation set  
% \begin{equation*}
%     R = \{ qubit~rearrangement, ~circuit~reversion\}
% \end{equation*} 
Previously, there was no theoretical analysis conducted. Below, we will conduct a asymptotical analysis using both pruning methods simultaneously in numerical optimization.

To prove Theorem \ref{thm3}, we need to give explicit analysis of these closures, stated as below. Denote that $f_{inv}$ is a function mapping a circuit structure space to its reversed structure, and $f_{arr}$ is a function mapping a circuit structure to the set of circuit structures that only differ in qubit arrangement.
% \begin{theorem}\label{thm3}
%     Algorithm \ref{alg3} reduces the scale of circuit structure space (with 3 qubits and N quantum gates) to 
%     \begin{equation*}
%         \frac{1}{12}\times 3^{N} + 3^{\lfloor\frac{N-1}{2}\rfloor} + \frac{1}{4} = \frac{1}{12}\times 3^{N} + o(3^N)
%     \end{equation*}
% \end{theorem}
\thmthree*
\begin{proof}
    Firstly we prove that there are only equivalent closures with 3 or 6 or 12 elements under $R$.

    For circuit structure with gates having only one position, its closure only has 3 elements.
    For circuit structure with at least 2 positions of gates, it has 5 other different elements under qubit rearrangement. 
    If a structure $C^{\prime}$ generated by reversing a circuit in rearrangement of original circuit $C$ is equal to a structure in rearrangement of original circuit $C$, then the rearrangement of $C^{\prime}$ is exact the arrangement of $C$.
    
    So either the reversion of arrangement of $C$ is the same as arrangement of $C$ or it has no same element with arrangement of $C$. In the first case, we have a equivalent closure with 6 elements, and in the other case we have a equivalent closure with 12 elements. They are the maximal set since we cannot generate a new element by apply any equivalent relation on any element of it.

    The rest part of proof is on the calculation of number of each kind of equivalent closure. We only have one 3-element closure. The question is, how many circuits $C$ are there satisfying 
    \begin{equation*}
        f_{inv}(C) \in f_{arr}(C),
    \end{equation*}
    where $C$ has at least two different positions (that is, $C$ is not an all-$(0,1)$ or all-$(0,2)$ or all-$(0,3)$ pattern).
    
    The rearrangement function set is isomorphic to the permutation group
    \begin{equation*}
        S_6 = \{f_1, f_2, f_3, \tau_1, \tau_2, \tau_3\},
    \end{equation*}
    in which
    \begin{equation*}
        f_1 = \binom{0~1~2}{0~1~2},~
        f_2 = \binom{0~1~2}{2~0~1},~
        f_3 = \binom{0~1~2}{1~2~0}, 
    \end{equation*}
    and
    \begin{equation*}
        \tau_1 = \binom{0~1~2}{1~0~2},~
        \tau_2 = \binom{0~1~2}{2~1~0},~
        \tau_3 = \binom{0~1~2}{0~2~1}.
    \end{equation*}
    
    Without causing confusion, we take $S_6$ to represent the rearrangement function set, and each element in $S_6$ refers to the corresponding qubit rearrangement operation.
    
    Then we calculate number of circuit $C$ satisfying
    \begin{equation}\label{eq:prune}
        f_{inv}(C) = f(C),~f\in S_6.
    \end{equation}
    We go into two situations:
    
    \textit{Case 1.} Here $N$ is even. 
    Assume $N = 2n$ and $C = a_1a_2\dots a_{2n}$ (according to the reading order of a quantum circuit), each $a_i$ means a position of entangled gate and can be $(0,1)$, $(0,2)$ or $(1,2)$. 
    
    If $f= f_1$, then $inv(C) = C$, there are $3^n-3$ circuits, since each $a_i = a_{2n+1-i}$ has 3 options and we exclude 3 circuits in the 3-element closure. 
    
    If $f = f_2$, we have
    \begin{equation*}
    \begin{aligned}
        a_{2n+1-i} &= f(a_i), \\
        a_i &= f(a_{2n+1-i}),
    \end{aligned}
    \end{equation*}
    for $1\leq i \leq 2n$.
    That is,
    \begin{equation*}
    \begin{aligned}
        a_i &= f(f(a_i)),
    \end{aligned}
    \end{equation*}
    for $1\leq i \leq 2n$.
    But it doesn't hold for any $a_i$. So there's no circuit satisfying (\ref{eq:prune}).
    
    If $f = f_3$, similarly there's no circuit satisfying (\ref{eq:prune}). 
    
    If $f = \tau_1$, then 
    \begin{equation*}
        \begin{aligned}
            (a_i, a_{2n+1-i}) &= ((0,1),(0,1)),\\
            or~(a_i, a_{2n+1-i}) &= ((0,2),(1,2)),\\
            or~(a_i, a_{2n+1-i}) &= ((1,2),(0,2)),
        \end{aligned}
    \end{equation*}
    for $1\leq i \leq n$.
    We have $3^n-1$ circuits, excluding the circuit with only $(0,1)$.
    
    If $f = \tau_2$, then 
    \begin{equation*}
        \begin{aligned}
            (a_i, a_{2n+1-i}) &= ((0,2),(0,2)),\\
            or~(a_i, a_{2n+1-i}) &= ((0,1),(1,2)),\\
            or~(a_i, a_{2n+1-i}) &= ((1,2),(0,1)),
        \end{aligned}
    \end{equation*}
    for $1\leq i \leq n$.
    We have $3^n-1$ circuits, excluding the circuit with only $(0,2)$.
    
    If $f = \tau_3$, then 
    \begin{equation*}
        \begin{aligned}
            (a_i, a_{2n+1-i}) &= ((1,2),(1,2)),\\
            or~(a_i, a_{2n+1-i}) &= ((0,1),(0,2)),\\
            or~(a_i, a_{2n+1-i}) &= ((0,2),(0,1)),
        \end{aligned}
    \end{equation*}
    for $1\leq i \leq n$.
    We have $3^n-1$ circuits, excluding the circuit with only $(1,2)$.

    So we have 1 closure with 3 elements, $\frac{1}{6}(4\times 3^n -6)$ closures with 6 elements, and $\frac{11}{12}[3^{2n}-(4\times 3^n -6) -3]$ closures with 12 elements. Algorithm \ref{alg3} successfully reduces the size of space to 
    \begin{equation*}
        \frac{1}{12}\times 3^{N} + 3^{\frac{N}{2}-1} + \frac{1}{4}.
    \end{equation*}

    \textit{Case 2.} Here $N$ is odd.
    Assume $N = 2n+1$ and $C = a_1a_2\dots a_{2n+1}$.
    
    If $f = f_1$, then
    \begin{equation*}
        \begin{aligned}
            (a_i, a_{2n+2-i}) &= ((0,1),(0,1)),\\
            or~(a_i, a_{2n+2-i}) &= ((0,2),(0,2)),\\
            or~(a_i, a_{2n+2-i}) &= ((1,2),(1,2)),
        \end{aligned}
    \end{equation*}
    for $1\leq i \leq n$. And $a_{n+1}$ can be arbitrarily chosen. Thus we have $3^{n+1}-3$ circuits. $f=f_2$ and $f=f_3$ are all similar to corresponding situations that $N$ is even. 
    
    If $f = \tau_1$, then 
    \begin{equation*}
        \begin{aligned}
            (a_i, a_{2n+1-i}) &= ((0,1),(0,1)),\\
            or~(a_i, a_{2n+1-i}) &= ((0,2),(1,2)),\\
            or~(a_i, a_{2n+1-i}) &= ((1,2),(0,2)),
        \end{aligned}
    \end{equation*}
    for $1\leq i \leq n$. And $a_{n+1} = (0,1)$.
    We have $3^n-1$ circuits, excluding the circuit with only $(0,1)$.
    
    If $f = \tau_2$, then 
    \begin{equation*}
        \begin{aligned}
            (a_i, a_{2n+1-i}) &= ((0,2),(0,2)),\\
            or~(a_i, a_{2n+1-i}) &= ((0,1),(1,2)),\\
            or~(a_i, a_{2n+1-i}) &= ((1,2),(0,1)),
        \end{aligned}
    \end{equation*}
    for $1\leq i \leq n$. And $a_{n+1} = (0,2)$.
    We have $3^n-1$ circuits, excluding the circuit with only $(0,2)$.
    
    If $f = \tau_3$, then 
    \begin{equation*}
        \begin{aligned}
            (a_i, a_{2n+1-i}) &= ((1,2),(1,2)),\\
            or~(a_i, a_{2n+1-i}) &= ((0,1),(0,2)),\\
            or~(a_i, a_{2n+1-i}) &= ((0,2),(0,1)),
        \end{aligned}
    \end{equation*}
    for $1\leq i \leq n$. And $a_{n+1} = (1,2)$.
    We have $3^n-1$ circuits, excluding the circuit with only $(1,2)$.

    So we have 1 closure with 3 elements, $\frac{1}{6}(2\times 3^{n+1} -6)$ closures with 6 elements, and $\frac{11}{12}[3^{2n}-(2\times 3^{n+1} -6) -3]$ closures with 12 elements. Algorithm \ref{alg3} successfully reduces the size of space to 
    \begin{equation*}
        \frac{1}{12}\times 3^{N} + \frac{1}{2}\times 3^{\lfloor\frac{N}{2}\rfloor} + \frac{1}{4}.
    \end{equation*}
    
    Combining the two cases, we have finished the proof of Theorem \ref{thm3}.
\end{proof}

Theorem \ref{thm3} tells that the pruning algorithm \ref{alg3} reduces the size of circuit structure space to $\frac{1}{12}+o(1)$ of previous size, significantly increasing the efficiency of algorithm \ref{alg:numerical toffoli} and algorithm \ref{alg:numerical 3qubit}.

% //TODO: Put some diagrams and tables into the process to make it more visible!!!!!!

%%%%%%%%%%%%%%%%%%%%%%%%%%%%%%%%%%%%%%%%%%%%%%%%%%%%%%%%%%%%%%
\subsection{Our results and analysis}
We apply the pruning algorithm and set the repetition time to 10 in both Algorithm~\ref{alg:numerical toffoli} and Algorithm~\ref{alg:numerical 3qubit}. Then, we present the results of the numerical optimization in a line chart, shown in Figure~\ref{fig3} and Figure~\ref{fig4}. It is worth noting that the reason we start with $N=5$ in algorithm \ref{alg:numerical toffoli} and algorithm \ref{alg:numerical 3qubit} is that a Toffoli gate needs at least 5 2-qubit gates \cite{Yu_2013} and lower bound of 2-qubit gates synthesizing any arbitrary 3-qubit gate is 6 \cite{shende2004,Shen2004}.
\begin{figure}[htbp]
    \begin{center}
    \subfigure[best error to approximate Toffoli gate]{
    \label{fig3}
    \scalebox{0.6}{
    \begin{tikzpicture}                 
    \begin{axis}[
    xlabel={N},
    ylabel={log(E)}
    ]                                   
    \addplot+[sharp plot]               
    coordinates                         
    {                                   
     (5,-1.9) (6,-2.3) (7,-2.5)
     (8,-10.9) (9,-11.8) (10,-12.4)
    };
    \end{axis}                         % 结束坐标
    \end{tikzpicture} 
    }
    }
    
    \subfigure[best average error to approximate arbitrary 3-qubit gates]{
    \label{fig4}
    \scalebox{0.6}{
    \begin{tikzpicture}                 % 绘图开始
    \begin{axis}[
    xlabel={N},
    ylabel={log(E)}
    ]                                   % 添加坐标
    \addplot+[sharp plot]               % 调用绘图函数，并设置绘图的类型是折线图
    coordinates                         % 声明是在笛卡尔坐标系中的数据
    {                                   % 输入数据
     (5,-1.2) (6,-2.3) (7,-2.7)
     (8,-2.9) (9,-3.2) (10,-3.7)
     (11,-6.1) (12,-6.8) 
    };
    \end{axis}                         % 结束坐标
    \end{tikzpicture}  
    }
    }
    \end{center}

    \caption{The logarithm of best error E in numerical optimization, using $N$ SQiSW gates in the circuit.}
    % \fignote{对图片的注释}
    % \label{fig3}

\end{figure}
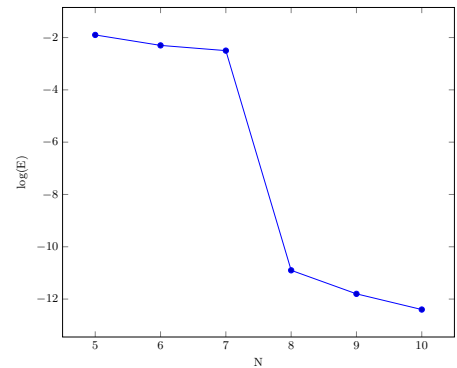
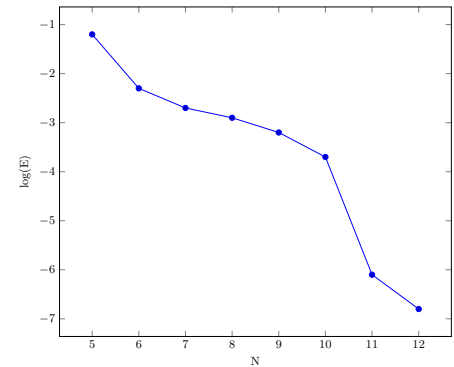

Fig.~\ref{fig3} and Fig.~\ref{fig4} indicate that 8 and 11 SQiSW gates are sufficient to numerically approximate Toffoli gate and arbitrary 3-qubit gates respectively.

As a comparison, we only need 11 SQiSW gates to numerically approximate arbitrary 3-qubit gates, and the number for CNOT gates is 14 \cite{chen2023}. Thus SQiSW gate works better with aspect to circuit size in numerical optimization of arbitrary 3-qubit gates. For Toffoli gate, though we need 2 more entangled gates compared to CNOT gate \cite{Ashhab2022}, a single SQiSW gate only contributes to half of the experimental error of a single CNOT gate \cite{Huang2023}. On the other hand, the numerical result can even help us derive an exactly synthesis scheme for Toffoli gate using 8 SQiSW gates, which is introduced in detail in section \ref{sec5}.

\section{Exact synthesis of Toffoli gate}\label{sec5}
In this section, we prove that 8 SQiSW gates are sufficient to exactly synthesize Toffoli gate and give a feasible synthesis scheme, which is based on the observation of parameters in numerical optimization.

The idea is repeatedly rounding and retraining the rest parameters. For the rounding, it's all human work to select the suitable value. We round the numerical parameters to a near constant number. As an example, we round 3.14158 to $\pi$, 1.57106 to $\pi/2$ and fix them and retrain the other parameters without obvious pattern at this stage. 

Sometimes it's not enough to just round the parameter to a constant number. We also try to observe and utilize the inner relation between different parameters. For instance, if parameters $x$ and $y$ satisfy $y=x+3.14267$ throughout the training, then we believe $y = x+\pi$. This is another rounding strategy.

After such process, we finally derive the circuit with only one free parameter, shown in Fig.~\ref{fig5}. Such a scheme can reach an error of even $10^{-16}$ in numerical optimization to approximate Toffoli gate, which is a strong evidence for the exact synthesis of Toffoli gate with 8 SQiSW gates. 

Now we prove Theorem \ref{thm4}, by assigning appropriate value to the parameter remained and prove the existence of solution of the circuit equations.
% \FloatBarrier
% \begin{theorem}[Exact synthesis of Toffoli gate]\label{thm4}
%     Toffoli gate can be exactly synthesized by 8 SQiSW gates and several single-qubit gates, a scheme shown in Figure \ref{fig5}.
% \end{theorem}

\begin{figure*}[!htbp]
\begin{center}
    \scalebox{0.55}{
    \begin{tikzpicture}
    \begin{yquant}
    qubit {} q[3];
    box {$R_y(\frac{\pi}{2})$} q[1];
    box {$R_z(\frac{\pi}{2})$} q[2];
    box {$R_z(-\frac{\pi}{2})$} q[1];
    box {$SQiSW$} (q[1-2]);
    box {$R_y(\frac{\pi}{2})$} q[2];
    box {$R_z(\frac{\pi}{2})$} q[2];
    box {$SQiSW$} (q[0],q[2]);
    box {$SQiSW$} (q[0],q[2]);
    box {$R_y(\frac{\pi}{2})$} q[0];
    box {$R_y(\theta_1)$} q[2];
    box {$SQiSW$} (q[0],q[1]);
    box {$R_y(\frac{\pi}{4})$} q[0];
    box {$R_y(-\frac{3\pi}{4})$} q[1];
    box {$R_z(\frac{\pi}{2})$} q[0];
    box {$R_z(\frac{\pi}{2})$} q[1];
    box {$SQiSW$} (q[1],q[2]);
    box {$R_z(-\pi)$} q[1];
    box {$R_y(\theta_2)$} q[2];
    box {$SQiSW$} (q[1],q[2]);
    box {$R_y(\frac{\pi}{2})$} q[1];
    box {$R_y(\theta_3)$} q[2];
    box {$R_z(-\pi)$} q[2];
    box {$SQiSW$} (q[0],q[2]);
    box {$SQiSW$} (q[0],q[2]);
    box {$R_y(\pi)$} q[0];
    box {$R_y(-\frac{\pi}{4})$} q[2];
    box {$R_z(-\frac{\pi}{4})$} q[0];
    box {$R_z(\frac{\pi}{2})$} q[2];
    \end{yquant}
    \end{tikzpicture}
    }
\end{center}
\caption{A synthesis scheme for Toffoli gate, using 8 SQiSW gates and several single-qubit gates, where $\theta_3=\theta_2+\pi/2=\theta_1+\pi$.}
\label{fig5}
\end{figure*}
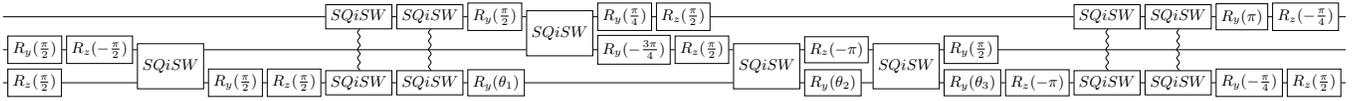

\thmfour*
\begin{proof}
    Replace $\theta_1$ by $x$. Calculate the matrix representation $U$ of the circuit in Fig.~\ref{fig5}, we have
\begin{widetext}
\begin{align*}
        U_{11} &= U_{22} = (-\frac{1}{4}+\frac{1}{4}i)(-1)^{\frac{3}{8}}(\cos{\frac{x}{2}}+\sin{\frac{x}{2}})(-2+i\sqrt{2}+(2+\sqrt{2})\sin{x}), \\
        U_{33} &= U_{44} = \frac{1}{2}(-1)^{\frac{7}{8}}(\cos{\frac{x}{2}}+\sin{\frac{x}{2}})(-i-\sqrt{2}+(1+\sqrt{2})\sin{x}),\\
        U_{55} &= U_{66} = -\frac{(-1)^{\frac{1}{8}}(\cos{\frac{x}{2}}+\sin{\frac{x}{2}})((5-i)-(4-i)\sqrt{2}+(-1-i+\sqrt{2})\sin{x})}{2(-1+(-1)^{\frac{1}{4}})^3},\\
        U_{87} &= U_{78} = -\frac{1}{2}(-1)^{\frac{1}{8}}(\cos{\frac{x}{2}}+\sin{\frac{x}{2}})(i-\sqrt{2}+(1+\sqrt{2})\sin{x}).
\end{align*} 
\end{widetext}

The rest elements of $U$ are either 0 or a multiple of $(\cos{\frac{x}{2}}-\sin{\frac{x}{2}})(1+(1+\sqrt{2})\sin{x})$. So we derive equations by making $U$ equal to the matrix representation of Toffoli gate:
\begin{widetext}
\begin{numcases}{}
(-\frac{1}{4}+\frac{1}{4}i)(-1)^{\frac{3}{8}}(\cos{\frac{x}{2}}+\sin{\frac{x}{2}})(-2+i\sqrt{2}+(2+\sqrt{2})\sin{x})    &= 1  \nonumber \\
\frac{1}{2}(-1)^{\frac{7}{8}}(\cos{\frac{x}{2}}+\sin{\frac{x}{2}})(-i-\sqrt{2}+(1+\sqrt{2})\sin{x})      &= 1  \nonumber \\
-\frac{(-1)^{\frac{1}{8}}(\cos{\frac{x}{2}}+\sin{\frac{x}{2}})((5-i)-(4-i)\sqrt{2}+(-1-i+\sqrt{2})\sin{x})}{2(-1+(-1)^{\frac{1}{4}})^3}   &= 1 \nonumber \\
-\frac{1}{2}(-1)^{\frac{1}{8}}(\cos{\frac{x}{2}}+\sin{\frac{x}{2}})(i-\sqrt{2}+(1+\sqrt{2})\sin{x})               &= 1 \nonumber \\
\cos{\frac{x}{2}} = \sin{\frac{x}{2}}\quad or\quad\sin{x} = 1-\sqrt{2}.\label{eq}
\end{numcases}
\end{widetext}

It can be verified that when the second equation of (\ref{eq}) holds, and the solution of it is in $(-\frac{\pi}{2},\pi)$, we have
\begin{numcases}{}
U_{11}*U_{87} &= 1 \nonumber  \\
U_{33}*U_{87} &= 1 \nonumber  \\
U_{33}*U_{55} &= 1 \nonumber  \\
U_{87} &= 1. \nonumber 
\end{numcases}

So the circuit equations have an analytical solution, and a feasible solution is $x = \arcsin{\left(1-\sqrt{2}\right)}$, where $-\pi/2 < x < 0$.

Thus, the circuit in Fig.~\ref{fig5} is exactly equal to Toffoli gate, when
\begin{equation*}
    \theta_3 = \theta_2 + \frac{\pi}{2} = \theta_1 + \pi = \arcsin{\left(1-\sqrt{2}\right)} + \pi.
\end{equation*}
\end{proof}

%%%%%%%%%%%%%%%%%%%%%%%%%%%%%%%%%%%%%%%%%%%%%%%%%%%%%%%%%%%%%%
%%%%%%%%%%%%%%%%%%%%%%%%%%%%%%%%%%%%%%%%%%%%%%%%%%%%%%%%%%%%%%
\section{Summary}\label{sum}
In this paper we propose an exact synthesis scheme for arbitrary $n$-qubit gates using only SQiSW gates and single-qubit gates. We utilize the algebraic properties of SQiSW gate to optimize the number of SQiSW gates required for synthesizing of an arbitrary 3-qubit gate, achieving a result of 24. Recursively, we provide an upper bound for the synthesis of an n-qubit gate, which is $\frac{139}{192}\times 4^n(1+o(1))$. Also, we state and analyze a pruning algorithm in numerical optimization of the synthesis problems using SQiSW gates, reducing the searching size to $\frac{1}{12}$ of previous size asymptotically. We conduct numerical experiment and find 8 and 11 SQiSW gates are enough for synthesis of Toffoli and arbitrary 3-qubit gates under acceptable error. Finally, we derive an exact synthesis scheme for Toffoli gate using only 8 SQiSW gates from numerical optimization, by observation of circuit parameters.

% The proof and optimization techniques we use in this paper have scalability potential. On one hand, we can use Weyl chamber to discover decomposition circuits with special structures (such as the ``diagonal edges'' in Lemma \ref{lemma6}) that are advantageous for optimization. On the other hand, the numerical rounding strategy in Toffoli part can be applied to similar application scenarios with other two-qubit gates or target gate.

It remains open whether 8 is a tight result for synthesis of Toffoli gate. That's to say, whether it requires \textit{at least} 8 SQiSW gates to synthesize Toffoli gate. We've already provided numerical evidence in this paper for the problem but the rigorous proof is still absent. Another interesting problem is, whether we can derive exact synthesis schemes in similar numerical-aided way in more complicate synthesis tasks. For example, can we raise an exact synthesis scheme for arbitrary 3-qubit gates using 11, or just less than 24 SQiSW gates by numerical observation? The essential difficulty is that Toffoli gate is a deterministic gate but ``arbitrary'' synthesis task has lots of free parameters in its target circuits. So it's hard to fix parameters explicitly and we must reserve enough free parameters in the final scheme. Also, how the optimization techniques used to achieve the bounds can be generalized is a question worth exploring. For example, one can try to use Weyl chamber to discover more decomposition schemes with special structures (such as the circuit with ``diagonal edge'' in Lemma \ref{lemma6}) that are advantageous for optimization.

\FloatBarrier
% The \nocite command causes all entries in a bibliography to be printed out
% whether or not they are actually referenced in the text. This is appropriate
% for the sample file to show the different styles of references, but authors
% most likely will not want to use it.
% \nocite{*}

%%%%%%%%%%%%%%%%%%%%%%%%%%%%%%%%%%%%%%
%%%%%%%%%%%%%%%%%%%%%%%%%%%%%%%%%%%%%%
\section{Acknowledgment}
The authors thank Longcheng Li, Jiadong Zhu, Ziheng Chen, Qi Ye and Jianxin Chen for helpful discussions. This work was supported in part by the National Natural Science Foundation of China Grants No. 62325210, 92465202, 62272441, and the Strategic Priority Research Program of Chinese Academy of Sciences Grant No. XDB28000000.

\FloatBarrier
\sloppy
\bibliographystyle{quantum}
\bibliography{ref}% Produces the bibliography via BibTeX.

\end{document}